\documentclass{article}
  \usepackage[main,preprint]{neurips_2026}

\usepackage[utf8]{inputenc}
\usepackage[T1]{fontenc}    
\usepackage{hyperref}       
\usepackage{url}            
\usepackage{booktabs}       
\usepackage{amsfonts}       
\usepackage{nicefrac}      
\usepackage{microtype}      
\usepackage{xcolor}        

\usepackage{graphicx, amsfonts, amsmath, tikz, amssymb, float, amsthm, algorithm, algpseudocode} 
\usepackage{thmtools}
\usepackage{thm-restate}
\usepackage{placeins}

\newtheorem{theorem}{Theorem}[section]
\newtheorem{lemma}[theorem]{Lemma}

\newtheorem{definition}[theorem]{Definition}
\newtheorem{fact}[theorem]{Fact}

\theoremstyle{remark}

\newcommand{\pw}{\mathrm{pw}}
\newcommand{\dist}{\mathrm{dist}}

\usepackage[colorinlistoftodos,textsize=small,color=red!25!white,obeyFinal]{todonotes}

\title{Tree Search With Predictions}

\author{
  Michael Dinitz\thanks{Funded in part by NSF award 2228995.}\\
  Department of Computer Science\\
  Johns Hopkins University\\
  Baltimore, MD 21218 \\
  \texttt{mdinitz@cs.jhu.edu} \\
  \And
  Bob Dong \\
  Department of Computer Science\\
  Johns Hopkins University\\
  Baltimore, MD 21218 \\
  \texttt{bdong9@jh.edu} \\
}

\begin{document}

\maketitle

\begin{abstract}
    ``Algorithms with predictions'', or ``learning-augmented algorithms'', has proved to be an extremely useful paradigm for combining machine learning with traditional algorithms.  One of the textbook settings for this is searching a sorted array.  Without a prediction, classical binary search takes $O(\log n)$ queries, while with a prediction we can use ``doubling binary search'' to find the target key using $O(\log \eta)$ queries, where $\eta$ is the error of the prediction measured as the absolute value of the difference between the true location and the predicted location.  Since an array is just a path graph, in this paper we ask whether similar bounds can be achieved for search on even slightly more general graphs: trees.  We show first that the high-level answer is ``no'': there is no search algorithm that uses $O(\log \eta)$ queries, where $\eta$ is now the graph distance between the predicted location and the true location.  However, as our main result, we show that such bounds can be achieved on trees which are ``path-like'' in that they have low \emph{pathwidth}.  In particular, we prove that there is a search algorithm which uses at most $O(k \log \eta)$ queries, where $k$ is the pathwidth of the tree.  We also prove a lower bound showing that our algorithm has existentially optimal query complexity.  Finally, we show experimentally, on real-life inputs, that our algorithm has query complexity which is notably better than the simple non-prediction-based algorithm.
\end{abstract}
 
\section{Introduction}
Traditional algorithms tend to optimize for the worst case, leading to extremely impressive worst-case performance but somewhat lacking performance in the usual case.  On the other hand, approaches based on machine learning often have extremely impressive performance in the usual case (at least if the training data are close to the true data) while suffering from extremely poor worst-case behavior (e.g., if the training data comes from a completely different distribution than we see in our real application).  An important framework that attempts to get the best of both is the ``algorithms with predictions'' framework, also called ``learning-augmented algorithms'' or ``algorithms with machine-learned advice''.  In this framework we are given a problem-dependent ``prediction'' (possibly learned by a machine learning system), and want to do well if the prediction is accurate but also have good worst-case bounds if the prediction is extremely inaccurate.  See Section~\ref{sec:related} for more discussion.

A paradigmatic example of this framework is searching a sorted array.  This problem is extremely simple and has an extremely simple solution, so it is a standard warm-up and motivation for this area; it is in fact the very first example considered in the survey of the area by~\citet{MitzenmacherVassilvitskii}.  In the basic search problem we are given an array $A$ of $n$ elements, each of which is from some totally ordered universe $U$, where $A$ is in sorted order with respect to the ordering of $U$.  We are also given a target element $t$, and the goal is to find which index of $A$ contains $t$, i.e., find the value of $i$ such that $A[i] = t$.  Given $t$, let $\alpha(t)$ denote this value of $i$.  The textbook solution to this is binary search, which finds $\alpha(t)$ in $O(\log n)$ queries.  

But what if we are also given a \emph{prediction} of the correct location, i.e., we are also given an index $\hat \alpha(t)$?  Then it turns out that the folklore ``doubling binary search'' algorithm (originally proposed in an infinite search context by \citet{BENTLEY197682}) achieves query complexity $O(\log \eta)$, where $\eta = |\alpha(t) - \hat \alpha(t)|$ is the error of the prediction.  Note that  $0 \leq \eta \leq n$, so this is never (asymptotically) worse than the non-prediction $O(\log n)$ bound and, if $\eta$ is small, can be notably better.   This type of bound, which degrades in a controlled way with the prediction error, is often called a ``smoothness'' guarantee, and is the gold standard for algorithms with predictions.  This is in contrast, for example, to algorithms which only give tradeoffs between ``consistency'' (performance when the prediction error $\eta = 0$) and ``robustness'' (performance in the worst case when $\eta$ is unbounded or as large as possible).  

This framework has become extremely popular and is surprisingly powerful; see Section~\ref{sec:related} for more discussion of related work.  There is even significant work building directly from this binary search example by considering more complicated types of predictions~\citep{DILMNV24}.  However, one natural generalization of searching a sorted array has not yet been explored: searching objects more complicated than arrays.  

To motivate this, note that an equivalent formulation of the search problem on a sorted array is a search problem on \emph{paths}.  In particular, suppose that we are given a path graph $P = (V, E)$ and are trying to find a particular target node $t \in V$, and are given a prediction $s \in V$.  When we query some point $x \in V$, we are told which neighbor of $x$ is on the unique path from $x$ to $t$, i.e., we are told which direction on the path $t$ is from $x$.  Our goal is to find (query) $t$ with as few queries as possible.  It is trivial to see that this is \emph{precisely} the problem of searching a sorted array.  And interpreted in the language of graphs, we get that the doubling binary search algorithm has query complexity $O(\log \dist(s,t))$, where $\dist(s,t)$ denotes the graph distance between our prediction $s$ and the true target node $t$.  

But now by thinking of an array as a path graph, there is a natural generalization to other types of graphs.  A particularly obvious generalization is to \emph{trees}, since in trees it is always true that for any node $x$ there is a unique neighbor of $x$ that is on the (unique) path from $x$ to $t$.  So on trees we have the exact same problem, which we now define formally:

\begin{definition} \label{def:search-on-trees}
    In the \emph{Search on Trees} problem we are given a tree $T = (V, E)$, and there is an unknown target node $t \in V$.  In the prediction setting we are also given a prediction $s \in V$.  We can query any node $x \in V$, and that query will return the unique neighbor of $x$ that is on the path from $x$ to $t$.  The goal is to find (query) $t$ using as few queries as possible.  
\end{definition}

Without predictions, there is an obvious generalization of traditional binary search to trees, where instead of querying the median of the remaining interval we query the centroid of the remaining subtree.  This obviously has query complexity $O(\log n)$, like in the path/array setting~\citep{OP06}.  But what about the prediction setting?  Is there an algorithm that, similar to the array setting, only requires $O(\log \dist(s,t))$ queries?  If not, what query complexity can be achieved?  Are there at least simple classes of trees where $O(\log \dist(s,t))$ queries is possible?

\subsection{Our Results}

In this paper we answer these questions.  We first show that trees are fundamentally different than paths, in that no algorithm can achieve query complexity of $O(\log \dist(s,t))$.  However, as our main result we show that trees which are ``path-like'' \emph{do} allow for this type of query complexity.  In particular, we design such an algorithm for trees with \emph{bounded pathwidth}, a well-studied definition of how path-like a graph is.  We give theoretical bounds on its query complexity, and also show experimentally on real life data that our theory matches reality: our algorithm significantly outperforms the traditional centroid-based algorithm when our predictions are reasonably accurate on low-pathwidth trees.  Finally, we prove a matching lower bound which shows that our dependence on the pathwidth and on $\dist(s,t)$ is existentially optimal (no algorithm can have better dependence on all trees).

\subsubsection{Initial Lower Bound}

We begin with the bad news: when we generalize from paths to trees, it is no longer possible to get $O(\log \dist(s,t))$ queries.

\begin{restatable}{theorem}{BinaryTreeLB} \label{thm:lb-binary-tree}
    There is no algorithm (deterministic or randomized) for the Search on Trees problem that has expected query complexity $O(\log \dist(s,t))$ for every prediction $s \in V$ and target $t \in V$.   
\end{restatable}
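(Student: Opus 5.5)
Our plan is to use a star-like tree in which the target sits at distance $1$ or $2$ from the prediction, yet finding it carries information-theoretic cost $\Omega(\log n)$. The cleanest candidate is a star. Let $T$ be a star with center $s$ and $n-1$ leaves, take the prediction to be the center $s$, and choose the target $t$ uniformly at random among the leaves. Then $\dist(s,t)=1$ for every such target, so an algorithm with expected query complexity $O(\log \dist(s,t))$ would need $O(1)$ expected queries (we read $O(\log d)$ as $O(1+\log d)$ so the bound is meaningful at $d=1$; using a complete binary tree with $s$ at the root and $\dist(s,t)=\Theta(\log n)$ leads to the same argument with the bound $O(\log\log n)$).

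The key step is a lower bound on queries in the star. A query at the center reveals $t$ immediately, but the algorithm must \emph{query} $t$ to finish. Querying a leaf $x\neq t$ only returns the center, which gives no information beyond ``$x\neq t$''. Hence, after the (useless beyond one) center query, every strategy amounts to querying leaves one at a time until it hits $t$. By Yao's principle, it suffices to lower bound deterministic algorithms against the uniform distribution on leaves. For such an algorithm, the $j$-th leaf query hits $t$ with probability at most $1/(n-1-j)$ given earlier misses, so the expected number of queries is at least $\Omega(n)$ \emph{if} the center is never queried. This is where the model matters.

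That is the main obstacle. Under Definition~\ref{def:search-on-trees}, querying the center returns the neighbor of $s$ toward $t$, namely $t$ itself, so the star is solved in $2$ queries and is useless. The right fix is to subdivide: use a spider with center $s$ and $m$ legs of length $L$, or better a complete binary tree rooted at $s$. In a complete binary tree of depth $h$ with $t$ a uniform leaf, each query answer has at most $3$ outcomes, so any deterministic algorithm needs at least $\log_3(2^h)=\Omega(h)$ queries in expectation, by the standard entropy or decision-tree argument. Yao's principle then transfers this bound to randomized algorithms. Since $\dist(s,t)=h$, a bound $O(\log h)$ would contradict the $\Omega(h)$ lower bound for large $h$.

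Concretely, the steps are as follows. (1) Fix the complete binary tree with prediction at the root. (2) Put the uniform distribution on leaves, so $\dist(s,t)=h$ always holds. (3) Argue via a decision tree with branching factor at most $3$, whose $2^h$ target leaves must be distinct outcomes, that the expected depth is at least $h\log_3 2$. (4) Apply Yao's principle to cover randomized algorithms. (5) Note that $c\log h<h\log_3 2$ for large $h$, contradicting any $O(\log\dist(s,t))$ guarantee. The one delicate point is step (3) for expected rather than worst-case depth. We would use Kraft's inequality, or the entropy of $t$, which equals $h$ bits, against the $\log_2 3$ bits revealed per query.
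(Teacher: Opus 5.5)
Your proof is correct, but it takes a different route from the paper's formal proof. The paper obtains this theorem as a corollary of Theorem~\ref{thm:MainLowerBound}. On the stretched trees $T_{k,\ell}$ with $\ell\ge k^2$, every zero-error algorithm needs $\Omega(k\log \dist(s,t))$ expected queries on some target, so letting $k\to\infty$ rules out any uniform $O(\log\dist(s,t))$ guarantee.

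You instead give a direct, self-contained argument on the complete binary tree with the prediction at the root. All $2^h$ leaf targets are at distance exactly $h$. Restricted to leaf targets, every query has at most $3$ possible answers. Kraft's inequality together with convexity of $x\mapsto 3^{-x}$ then gives average decision-tree depth at least $\log_3 2^h=h\log_3 2$, and Yao's principle transfers this to zero-error randomized algorithms. This is essentially the direct argument the paper only sketches in its introduction, where $s$ is a leaf rather than the root. Putting $s$ at the root is a bit cleaner, since every target then has the same distance.

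The two routes buy different things. Yours shows the stronger fact that some trees force $\Omega(\dist(s,t))$ queries, matching the naive trace algorithm up to constants. The paper's route gets the theorem for free from its pathwidth-sensitive bound, but only exhibits a separation by a factor $\Theta(k)$ over $\log\dist(s,t)$.

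One small correction to your exploratory remarks: the spider is not a fix either. Querying its center reveals the leg containing $t$, and exponential search along that leg then costs $O(\log\dist(s,t))$. Your steps (1)--(5) use only the binary tree, so this does not affect the proof.
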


This theorem turns out to be quite simple to prove.  Informally, consider the complete binary tree with $n$ leaves.  Then given a prediction $s$ which is a leaf, at least $\Omega(n)$ other leaves are at least $\Omega(\log n)$ away from $s$.  Distinguishing which is the true target must take at least $\Omega(\log n)$ queries for some target through a standard information-theoretic argument.  So in this case the number of needed queries is at least $\Omega(\dist(s,t))$, exponentially worse than our goal of $O(\log \dist(s,t))$ and essentially matching the trivial algorithm of just following the query path.  It turns out that this theorem is also a corollary of our stronger lower bound (Theorem~\ref{thm:MainLowerBound}), so rather than the direct proof sketched above we provide the proof as a corollary in Appendix~\ref{app:lower-bound}.

\subsubsection{Bounded pathwidth: upper bound}
Fortunately, we show that Theorem~\ref{thm:lb-binary-tree} is not the end of the story: many trees actually \emph{do} admit algorithms with low query complexity.  In particular, we show that trees with bounded \emph{pathwidth} admit such algorithms.  We define pathwidth formally in Section~\ref{sec:prelims}, but it is a standard way of quantifying how close a graph is to a path, in essentially the same way that treewidth measures how close a graph is to a tree.  Paths (and slight extensions to paths like caterpillars) have pathwidth $1$, while on the other extreme the complete binary tree has pathwidth $\Theta(\log n)$.  Every other tree has pathwidth somewhere between these two extremes. Note that the pathwidth is well-defined for any graph, but we will only apply it to trees.  This is not particularly unusual; there is significant previous work (which we draw on) studying the special properties of trees with low pathwidth (we will particularly use results from~\citet{doi:10.1142/S0218195904001433}).

More formally, as our main result we prove the following theorem in Section~\ref{sec:upper}.
\begin{restatable}{theorem}{MainUpperBound} \label{thm:MainUpperBound}
There is a deterministic algorithm for the Search on Trees problem which, when given a tree $T = (V, E)$ of pathwidth $k$ and a prediction $s \in V$, finds the unknown target $t \in V$ using at most $O(k \log \dist(s,t))$ queries.
\end{restatable}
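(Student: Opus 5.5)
Our plan is a doubling search around $s$, combined with a recursive structural decomposition of low-pathwidth trees. The key structural fact we will use (in the spirit of the results of \citet{doi:10.1142/S0218195904001433}) is that a tree of pathwidth $k$ has a \emph{main path}. This is a path $P$ such that every component of $T \setminus P$ has pathwidth at most $k-1$; for $k=1$ such components are isolated vertices (caterpillars), and for $k=0$ the tree is a single vertex. We induct on $k$ and prove the stronger statement that from any start vertex $s$ the target is found within $c\,k\log(\dist(s,t)+2)$ queries.

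\textbf{Step 1 (doubling on balls).} We first reduce to searching the ball $B(s,r)$ for $r = 2^i$, $i=1,2,\dots$. Let $T_r$ denote the subtree induced by the ball $B(s,r)$. A query at a boundary vertex of $T_r$ reveals whether $t$ leaves the ball, so we restrict the search to $T_r$. Phase $i$ should cost $O(k \cdot i)$ queries and either find $t$ or certify $t \notin B(s,2^i)$. The costs then sum to $O(k\log^2\dist)$, which is too large. To avoid the square, we will not restart each phase. Instead we keep a single nested process, so that the $j$-th level of the induction pays only $O(\log \dist)$ once.

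\textbf{Step 2 (search along the main path).} Let $P$ be a main path of $T$, and let $p$ be the vertex of $P$ closest to $s$ (the attachment point of the component containing $s$). The projection $\pi(t)$ of $t$ onto $P$ satisfies $\dist(p,\pi(t)) \le \dist(s,p)+\dist(s,t)$. Our approach is as follows.

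First, we recursively search for $t$ inside the component $C$ of $T\setminus P$ containing $s$, which has pathwidth at most $k-1$. A query at $p$ tells us whether $t \in C$ at all, and in that case induction gives cost $(k-1)c\log\dist(s,t)$.

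Otherwise, we run doubling binary search on $P$ from $p$. A query at a path vertex $x$ answers "left", "right", or "off-path into component $D$", so it locates $\pi(t)$ in $O(\log \dist(p,\pi(t)))$ queries. We then recurse into the off-path component $D$ from its attachment point, with pathwidth at most $k-1$.

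\textbf{Main obstacle.} The difficulty is that $\dist(s,p)$ can be large compared to $\dist(s,t)$, and the recursion's start vertex $\pi(t)$ may be far from $t$ relative to $\dist(s,t)$. To handle this we will choose $P$ adaptively, \emph{through} $s$. We take $P$ to be a main path of a suitable subtree containing $s$, which is possible because the main-path characterization is flexible enough that we can restrict to the ball $B(s,2\dist)$ (guessed by doubling) and extend a path through $s$ with the component property preserved up to an additive constant in width. With $p=s$, every distance in the recursion is at most $\dist(s,t)$, since $\dist(s,\pi(t))\le\dist(s,t)$ and $\dist(\pi(t),t)\le \dist(s,t)$. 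Each level of the induction then costs $O(\log\dist(s,t))$, and there are at most $k$ levels, giving $O(k\log\dist(s,t))$. Establishing this rooted main-path lemma (a path from $s$ whose removal drops the pathwidth) is the step I expect to be hardest. I would try to prove it using the Ellis–Sudborough–Turner/Suchan-style characterization that a tree has pathwidth $\ge k+1$ iff some vertex has three branches of pathwidth $\ge k$: following branches of maximum pathwidth from $s$ should yield such a path, at the cost of possibly one extra level of width.
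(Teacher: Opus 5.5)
\textbf{The gap is in how you propose to finish.} Your last paragraph relies on a ``rooted main-path lemma'', and that lemma is false in the form you need. Take the caterpillar obtained from the path $q_1q_2q_3q_4q_5$ by attaching a leaf $s$ to $q_3$; it has pathwidth $1$. Every path containing $s$ has $s$ as an endpoint and leaves $q_3$ in only one direction, so the edge $q_1q_2$ or the edge $q_4q_5$ survives as a component of pathwidth $1$. Restricting to $B(s,r)$ does not help once $r\ge 3$, since the ball is then the whole tree. A version holding only ``up to an additive constant in width'' no longer makes the pathwidth drop, so it gives no $O(k)$ bound on the recursion depth. Separately, guessing the radius $2\dist(s,t)$ by doubling brings back exactly the $O(k\log^2\dist(s,t))$ cost you flagged in Step~1, and the ``single nested process'' meant to avoid it is never specified.

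None of this is needed: the obstacle is illusory, and Step~2 as you wrote it is already a complete proof. The weak bound $\dist(p,\pi(t))\le\dist(s,p)+\dist(s,t)$ is only used when the query at $p$ reports $t\notin C$. In that branch the $s$--$t$ path runs from $s$ to $p$, along $P$ to $\pi(t)$, and then through the attachment vertex $y$ of $D$, because each component of $T\setminus P$ meets $P$ through a single edge. Hence $\dist_P(p,\pi(t))\le\dist(s,t)$ and $\dist(y,t)<\dist(s,t)$; and $\pi(t)$, lying on that path, satisfies $\dist(\pi(t),t)\le\dist(s,t)$. Conversely, $\dist(s,p)>\dist(s,t)$ forces $t\in C$, where you recurse from $s$ itself. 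So both branches apply the inductive hypothesis to a tree of pathwidth at most $k-1$ with start-to-target distance at most $\dist(s,t)$. Taking $c$ at least one more than the exponential-search constant then closes the induction at $c\,k\log(\dist(s,t)+2)$.

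With Step~1 and your final paragraph deleted, this is essentially the paper's proof: fix a spine decomposition, run exponential search along the current spine from the projection of the current start vertex, and descend into the off-spine component. That gives at most $k$ phases of $O(\log\dist(s,t))$ queries each. In one respect your version is more careful than the paper's. Its algorithm always resets $s_{\mathrm{cur}}\gets y$, but when $s_{\mathrm{cur}}$ and $t$ lie in the same off-spine component (e.g.\ adjacent vertices deep inside a long path hanging off the spine), $y$ need not lie on the $s_{\mathrm{cur}}$--$t$ path. Keeping the current start vertex in that case, as your branch ``$t\in C$: recurse from $s$'' does, is what makes the paper's claim $\dist(s_{\mathrm{cur}},t)\le\dist(s,t)$ actually hold.
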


To get some intuition for our algorithm, consider the special case of a \emph{caterpillar} tree.  Formally, a caterpillar is a tree in which there is a special path (usually called the \emph{spine}) such that every node not in the spine is at distance $1$ from the spine.  So a caterpillar is \emph{almost} a path, and indeed like a path has pathwidth $1$.  We claim that there is an obvious algorithm for solving the Search on Trees problem in a caterpillar.  Given a prediction $s$, let $s'$ be the closest node to $s$ on the spine (so either $s$ itself or the neighbor of $s$ that is on the spine).  Now run the classic doubling binary search algorithm for a path on the spine, starting from $s'$.  If $t$ is on the spine then this clearly takes at most $O(\log \dist(s', t)) = O(\log \dist(s, t))$ queries.  If $t$ is not on the spine, then in $O(\log \dist(s', t) -1)$ queries we will have queried the neighbor $t'$ of $t$ that is on the spine, which will then point us directly to $t$.  So in either case, we use at most $O(\log \dist(s,t))$ queries.  

Clearly this can be extended to other related graphs.  For example, we can clearly handle the (less standard) definition of a caterpillar where instead of every spine node being adjacent to a set of leaves, each spine node is adjacent to a set of arbitrarily long paths (i.e., deleting the spine results in a collection of disjoint paths).  In these types of caterpillars we simply start from $s'$ being the closest node on the spine to $s$ (not necessarily a neighbor of $s$), and then when we find $t'$ we run doubling binary search again on the path which contains $t$.  In other words, the key feature that we need is a ``spine'' path on which we can run doubling binary search, and then we can recursively search whatever (hopefully simpler) tree the target node is located in off of the spine.  It turns out that trees of bounded pathwidth are known to have a path with precisely these properties: it was shown by~\citet{doi:10.1142/S0218195904001433} that any tree of pathwidth $k$ has what they call a ``main path'' which, when deleted, leaves us with a disjoint collection of subtrees all of which have pathwidth at most $k-1$.  So we can treat this main path as a spine, doing doubling binary search on it to figure out which subtree contains the target $t$, and then recursively search the resulting $k-1$-pathwidth subtree.  Once properly formalized, it is not hard to see that we do at most $k$ doubling binary searches, each of which takes at most $O(\log \dist(s,t))$ queries, for a total query complexity of $O(k \log \dist(s,t))$ as claimed.

\paragraph{Robustness.}
It is common in the algorithms with predictions literature to focus on \emph{robustness}: if the prediction is arbitrarily bad, then we would like our algorithm to still be no worse (asymptotically) than the best algorithm without predictions.  In many problems it is highly nontrivial to get both robustness and consistency (good performance when the prediction is exactly accurate).  But we note that in the Search on Trees problem, we can get robustness ``for free'' by simply running the centroid algorithm in parallel with our algorithm.  That is, we can run each algorithm separately, and stop whenever the first one finishes.  This gives us a query complexity of $O(\min(\log n, k \log \dist(s,t)))$.  So for the remainder of the paper we do not discuss robustness.

\paragraph{Running Time.}
We focus on query complexity rather than running time, following the lead of~\citet{MitzenmacherVassilvitskii} and \citet{DILMNV24}.  There are multiple justifications for this.  First, queries might be much more expensive than computation; think of a case where each query itself requires either a significant amount of computation, or something like a physical experiment (when we do search in, for example, a scientific discovery context).  Second, our search algorithms can be computed \emph{offline}: given the tree $T$, we can compute ahead of time the policy that we will use for any given prediction $s$.  This takes time, of course, but can be done before seeing any queries.  Then when we see queries, following the policy is as simple as following a search tree, and our query time is equal to our query complexity.  So we can essentially trade increased ``preprocessing time'' for decreased ``query time''.

But even with these justifications, one might be skeptical of our algorithm if its running time were astronomical (e.g., exponential).  Fortunately, our algorithm turns out to have polynomial running time, allowing us to run relatively large-scale experiments (see Section~\ref{sec:experiments}).  The most time-intensive step is computing the hierarchy of spines, so in Appendix~\ref{app:computing k-spine} we discuss explicitly how to do this step efficiently. 

\subsubsection{Pathwidth lower bound}
A natural followup question to Theorem~\ref{thm:MainUpperBound} is whether there are matching lower bounds.  Most notably: do we actually need to have query complexity $\Omega(k \log \dist(s,t))$?  Note that Theorem~\ref{thm:lb-binary-tree} does not give such a bound, since the complete binary tree has pathwidth $\Theta(\log n)$ and requires $\Theta(\log n)$ queries on pairs at distance $\Theta(\log n)$, and so it only implies lower bounds of $\Omega(k)$ (purely as a function of $k$) or $\Omega\left(\frac{k}{\log k} \log \dist(s,t)\right)$ (as a function of both $k$ and $\log \dist(s,t)$).  

We improve this lower bound by showing through a more complicated class of trees that our upper bound is tight: no algorithm can always have query complexity $o(k \log \dist(s,t))$, even if we allow randomized algorithms and expected query complexity.

\begin{restatable}{theorem}{MainLowerBound} \label{thm:MainLowerBound}
    There is no algorithm (deterministic or randomized) for the Search on Trees problem that has expected query complexity $o(k \log dist(s,t))$ for every tree $T = (V, E)$ of pathwidth $k$, prediction $s \in V$, and target $t \in V$.
\end{restatable}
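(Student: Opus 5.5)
We prove the bound via Yao's minimax principle. For every pair of parameters $k$ and $d$ we construct a tree $T_{k,d}$ of pathwidth $\Theta(k)$, a prediction $s$, and a distribution over targets $t$ with $\dist(s,t)=O(kd)$, or even $O(d)$ after a suitable rescaling. We then show that every deterministic algorithm needs $\Omega(k\log d)$ queries in expectation against this distribution, which rules out an $o(k\log \dist(s,t))$ guarantee. The natural construction is a recursive ``comb of combs''. Let $T_1$ be a path of length $d$ rooted at one endpoint. To form $T_{j}$, take a spine path $v_1,\dots,v_d$ and attach a copy of $T_{j-1}$ (by its root) to each $v_i$. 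The root of $T_j$ is $v_1$. The prediction $s$ is the root of $T_k$, and the target is a uniformly random leaf, obtained by independently choosing one index $i_j\in[d]$ at each of the $k$ levels. The standard fact that a complete ternary-like branching raises pathwidth by one, or directly the main-path characterization of~\citet{doi:10.1142/S0218195904001433}, gives pathwidth at most $k$, and at least $\Omega(k)$ for $d\ge 3$. However, the depth is $kd$, so $\dist(s,t)$ could be as large as $kd$ and $\log \dist(s,t)=\log k+\log d$. Choosing $d\ge k$ makes $\log\dist(s,t)=\Theta(\log d)$.

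The core step is to show that the algorithm learns only $O(1)$ bits per query about the vector $(i_1,\dots,i_k)$ of $k\log d$ total bits. The point of the comb structure is that a query at a node inside level $j$ returns one of at most three answers: toward the root, further along the spine, or into the attached subtree. A query at a node off the current ancestor path reveals essentially nothing beyond ``go back''. Each answer therefore carries at most $\log 3$ bits. A counting argument finishes the step: a deterministic algorithm is a ternary decision tree whose leaves must distinguish $d^k$ targets, so some target, and indeed on average targets, require $\log_3(d^k)=\Omega(k\log d)$ queries. Since the algorithm must actually query $t$, the leaves of the decision tree correspond to distinct targets. Yao's principle then transfers the bound to the expected complexity of randomized algorithms.

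The main obstacle is making the distance small enough. With the construction above, $\dist(s,t)$ is about $kd$, which is fine when $d\ge k$. This suffices to refute any $o(k\log\dist)$ guarantee, because we may let $k\to\infty$ with $d=k$ and get $\Omega(k\log k)$ queries versus $k\log\dist=\Theta(k\log k)$. I need to double-check that the ``$o(\cdot)$'' is interpreted as failing for some family with both parameters growing; if it must hold for fixed $k$ with $d\to\infty$, the same construction works directly. The other delicate point is verifying the pathwidth bound rigorously. I would argue the upper bound by induction, using a path decomposition that sweeps along the spine and adds the spine vertex to each bag of the child's decomposition. The lower bound needs only pathwidth $\le k$, because a tree of smaller pathwidth also satisfies ``pathwidth $k$'' bounds trivially up to constants. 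Finally, Theorem~\ref{thm:lb-binary-tree} follows as a corollary by taking $d$ constant, giving $\Omega(k)=\Omega(\dist)$ queries with $\dist=\Theta(k)$.
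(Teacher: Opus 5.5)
Your proposal is correct and follows essentially the same route as the paper: the same recursive spine-with-attached-copies construction (the paper's $T_{k,\ell}$, which uses a single-vertex base case where you use a path), the spine-sweeping path decomposition to show $\pw\le k$, a bounded-degree decision-tree counting argument giving average depth $\Omega(\log d^k)=\Omega(k\log d)$, and Yao's principle. One small correction: because you draw indices from all of $[d]$, you do not get $\log\dist(s,t)=\Theta(\log d)$ for every target (the paper restricts indices to the upper half $I$ to obtain that two-sided bound); you only get $\dist(s,t)\le kd$, but the resulting $\log\dist(s,t)=O(\log d)$ for $d\ge k$ is all the argument actually needs.
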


The details of this construction and proof can be found in Appendix~\ref{app:lower-bound}.  While our construction is more complicated than the simple complete binary tree, it is intuitively just a ``stretched'' version of such a tree.  The reason that the complete binary tree does not give such a lower bound is that each ``spine'' in the recursive call is just a single node (the root of the subtree), so we only need one query at every level rather than $\log \dist(s,t)$.  To get our desired lower bound, we simply have to ``stretch'' each spine at each level to be longer.  Carefully balancing parameters leads to Theorem~\ref{thm:MainLowerBound}.

\subsection{Related Work} \label{sec:related}

\paragraph{Algorithms with Predictions.} The algorithms with predictions setting has become quite popular, so there is significant work in this framework for a variety of problems and models.  We refer the interested reader to the early survey of~\citet{MitzenmacherVassilvitskii} as well as the invaluable website which keeps track of the area~\citep{ALPSweb}.  It is usually considered to have been initiated by the seminal work of~\citet{lykouris2021competitive}, who formally defined the setting and introduced the popular notions of robustness and consistency.  Since then, it has been studied in settings as diverse as online algorithms (ski rental \citep{Purohit}, scheduling \citep{LattanziLMV}, knapsack \citep{ImKQP21}, set cover \citep{BamasMS20}, and more), speeding up combinatorial algorithms~\citep{DinitzILMV21}, dynamic algorithms~\citep{BrandFNP24}, mechanism design~\citep{AgrawalBGOT24}, and many more.

Most related to our paper is the survey of~\citet{MitzenmacherVassilvitskii}, which first introduced the binary search problem in the predictions context.  Also related is the work of~\citet{DILMNV24}, which generalized the binary search problem in a different way: rather than to more general graphs (like us), they stay in the path but allow for \emph{distributional} predictions.  Extending our results to distributional predictions is an interesting problem that we leave to future work.

\paragraph{Search in trees and graphs.} As mentioned, there has been significant work extending binary search on paths to trees and even more general graphs in the non-predictions context.  This was kicked off by~\citet{OP06}, who among other results proved that the centroid algorithm (discussed earlier) has $O(\log n)$ query complexity.  A further important generalization was to the ``Search Trees on Trees'' (STT) problem, where we are given a tree (like in our settings) and are given a distribution over the nodes in that tree, and are asked to compute an optimal search strategy (equivalently, an optimal search \emph{tree}) for that tree and distribution.  Note that this is like a ``distributional prediction'', except the assumption is that the given distribution is perfectly accurate.  For example, \citet{BGKK23} showed that the centroid algorithm (mentioned earlier) not only has $O(\log n)$ query complexity, but given a distribution over the nodes the natural weighted generalization is a $2$-approximation of the \emph{optimal} search strategy and can be computed extremely quickly.  There has also been work on generalizing to graphs beyond trees~\citep{EKS16}.

\section{Preliminaries and Notation}
\label{sec:prelims}

Given a tree $T$, we denote its vertex set by $V(T)$ and its edge set by $E(T)$. For vertices $u,v\in V(T)$, let $\dist_T(u,v)$ denote the length of the unique simple path between $u$ and $v$ in $T$. If $P$ is a path in $T$ and $u,v\in V(P)$, then $\dist_P(u,v)$ denotes the number of edges on the subpath of $P$ between $u$ and $v$.

We first formalize the oracle model used throughout the paper.

\begin{definition}[Direction oracle]
\label{def:direction-oracle}
Let $T=(V(T),E(T))$ be a tree, and let $t\in V(T)$ be a hidden target vertex. The direction oracle for $t$ is the map
\[
    \mathrm{dir}_t:V(T)\to V(T)\cup\{\textsf{here}\}
\]
defined as follows. If $v=t$, then $\mathrm{dir}_t(v)=\textsf{here}$.  Otherwise, $\mathrm{dir}_t(v)$ is the unique neighbor $u$ of $v$ that lies on the simple path from $v$ to $t$ in $T$. 
An oracle query at $v$ returns $\mathrm{dir}_t(v)$.
\end{definition}

In the search problem, the algorithm is given the tree $T$ and a prediction $s\in V(T)$ for the target, but it does not know the target $t$. The algorithm may adaptively query vertices of $T$ through the direction oracle $\mathrm{dir}_t$. The algorithm succeeds when it outputs $t$; equivalently, it may stop once it queries a vertex $v$ with $\mathrm{dir}_t(v)=\textsf{here}$. The query complexity of an algorithm is the number of oracle queries it makes. When the target is clear from context, we write $\mathrm{dir}$ instead of $\mathrm{dir}_t$.

We will use the following standard path-search primitive.

\begin{fact}[Exponential search on a path]
\label{fact:exponential-search-path}
Let \(P\) be a simple path, let \(a\in V(P)\) be a starting vertex, and let \(x\in V(P)\) be an unknown target vertex. Suppose that each query to a vertex \(v\in V(P)\) returns \(\textsf{here}\) if \(v=x\), and otherwise indicates which of the two directions along \(P\) contains \(x\). Then there exists an adaptive algorithm, namely exponential (doubling binary) search, that identifies \(x\) using $O\bigl(\log(\dist_P(a,x)+1)\bigr)$ oracle queries~\citep{BENTLEY197682}.
\end{fact}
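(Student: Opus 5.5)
We prove Fact~\ref{fact:exponential-search-path} by analyzing the classical two-phase doubling procedure directly; the argument works on the ordered vertex sequence of \(P\) and uses nothing else from the paper. Write \(d=\dist_P(a,x)\). First query \(a\). If the answer is \textsf{here}, we are done with one query, which is consistent with \(O(\log(0+1)+1)\); we read the \(O(\cdot)\) as allowing an additive constant, since at least one query is always needed. Otherwise the answer fixes one of the two directions along \(P\). Orient \(P\) so that this direction is increasing, and index the vertices on that side as \(a=v_0,v_1,v_2,\dots,v_m\), where \(v_m\) is the endpoint of \(P\). Then \(x=v_d\) with \(1\le d\le m\).

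In the doubling phase, for \(i=0,1,2,\dots\) we query \(v_{\min(2^i,m)}\). We stop at the first \(i\) for which the answer is \textsf{here} or points back toward \(a\). The key invariant is that a query at \(v_j\) with \(j<d\) points away from \(a\), so the phase continues exactly while \(2^i<d\). It therefore stops at the first \(i^*\) with \(\min(2^{i^*},m)\ge d\). Clamping at \(m\) is safe: if \(v_m\) is queried and is not the target, the answer must point back, because \(d\le m\). Since \(2^{i^*-1}<d\), we get \(i^*\le \lceil\log_2 d\rceil\), so this phase uses at most \(\lceil\log_2 d\rceil+1\) queries.

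In the binary-search phase, if the doubling phase did not end with \textsf{here}, then \(x\) lies strictly between \(v_{\lfloor 2^{i^*-1}\rfloor}\) and \(v_{\min(2^{i^*},m)}\). For \(i^*=0\) we interpret the left end as \(v_0\), which has already been queried. This open interval contains fewer than \(2^{i^*-1}\le d\) vertices. On this subpath the oracle is exactly a comparison oracle for a sorted array, so ordinary binary search finds \(x\) in at most \(\lceil\log_2(d+1)\rceil+1\) further queries.

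Adding the three contributions gives at most \(1+(\lceil\log_2 d\rceil+1)+(\lceil\log_2(d+1)\rceil+1)=O(\log(d+1)+1)\) queries, which is \(O(\log(\dist_P(a,x)+1))\) in the sense above. There is no real obstacle here. The only points that need care are the boundary cases: the target being the start vertex (\(d=0\)), the doubling step overshooting the end of \(P\) (handled by clamping at \(v_m\)), and the off-by-one bookkeeping of which interval endpoints have already been queried.
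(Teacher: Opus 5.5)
Your proof is correct and is the standard two-phase analysis of doubling binary search (doubling to bracket $x$, then binary search inside the bracket), which the paper does not prove but simply invokes via its citation. Your caveat that the bound must be read with an additive constant is also right: $\log(\dist_P(a,x)+1)=0$ when $x=a$, yet at least one query is always needed.
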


Next we recall the definition of pathwidth. The standard definition can be found in Appendix~\ref{app:prelims}.  For our purposes, we use the following definition specific for trees which was shown to be equivalent (for trees) by~\citep{KINNERSLEY1992345}. 
\begin{definition}[Recursive characterization for trees]
\label{def:pathwidth-recursive}
Let $T$ be a tree. Then $\pw(T)=0$ if and only if $T$ consists of a single vertex. For $k\ge 1$, a tree $T$ has pathwidth at most $k$ if and only if, for every vertex $x\in V(T)$, at most two connected components of $T\setminus\{x\}$ have pathwidth at least $k$.
\end{definition}

Equivalently, $\pw(T)>k$ if and only if there exists a vertex $x\in V(T)$ such that $T\setminus\{x\}$ has at least three connected components of pathwidth at least $k$.

We now define the spine structure used by the algorithm. Informally, a spine is a path whose removal lowers the pathwidth of every remaining component.

\begin{definition}[$k$-spine of a tree]
\label{def:k-spine}
Let \(T\) be a tree with \(\pw(T)=k\). A \emph{$k$-spine} of \(T\) is a path \(P\subseteq V(T)\) such that every connected component of \(T\setminus P\) has pathwidth at most \(k-1\).
\end{definition}

This is the same object as the \emph{main path} of~\citet{doi:10.1142/S0218195904001433}, who proved that every tree admits such a path. Hence every tree admits the following recursive decomposition.

\begin{definition}[$k$-spine decomposition]
\label{def:k-spine-decomposition}
Let \(T\) be a tree with \(\pw(T)=k\). A \emph{$k$-spine decomposition} is formed by recursively removing spines: first remove a $k$-spine \(K_T\) of \(T\), then recursively remove a \(\pw(C)\)-spine from each connected component \(C\) of \(T[V(T)\setminus K_T]\), stopping at singleton components. The removed spines form vertex-disjoint paths whose union is \(V(T)\).
\end{definition}

A $k$-spine decomposition need not be unique. Throughout the paper, the algorithm uses an arbitrary fixed $k$-spine decomposition; the guarantees in Section~\ref{sec:alg-analysis} hold for any such choice.

\section{Main Algorithm and Analysis} \label{sec:upper}

\subsection{The Algorithm}

We now describe the search algorithm. Fix an arbitrary $k$-spine decomposition of $T$. The algorithm maintains a current component $T_{\mathrm{cur}}$ and a current start vertex $s_{\mathrm{cur}}\in V(T_{\mathrm{cur}})$. Each phase operates on the spine of the current component. Let $P$ be the spine of $T_{\mathrm{cur}}$ specified by the fixed decomposition. The algorithm begins the phase by projecting $s_{\mathrm{cur}}$ onto $P$; that is, it chooses the closest vertex $a\in P$ to $s_{\mathrm{cur}}$.

The algorithm then performs exponential search along $P$ starting from $a$. The oracle answers on $P$ have a one-dimensional interpretation. Let $x\in P$ be the unique vertex where the path from the target to $P$ meets the spine; if the target lies on $P$, then $x$ is the target itself. For every queried vertex $v\in P$ with $v\ne x$, the oracle returns the neighbor of $v$ on $P$ in the direction of $x$. Thus, away from $x$, the search proceeds exactly as one-dimensional exponential search on a path. When the search reaches $x$, one of two things happens. If the target lies on $P$, then the oracle returns $\textsf{here}$ and the algorithm terminates. Otherwise, the oracle returns a neighbor $y\notin P$, and the target lies in the unique connected component of $T_{\mathrm{cur}}\setminus P$ containing $y$.

In the latter case, the algorithm descends into that off-spine component: it sets $T_{\mathrm{cur}}$ to the component of $T_{\mathrm{cur}}\setminus P$ containing $y$ and sets $s_{\mathrm{cur}}:=y$. Thus each phase either finds the target on the current spine or identifies the unique lower-pathwidth component containing the target. Since removing a spine reduces the pathwidth of every remaining component by at least one, there are at most $k$ nontrivial phases.

Algorithm~\ref{alg: kspine-main} gives the full procedure.

\begin{algorithm}
\caption{$k$-Spine Exponential Search}
\label{alg: kspine-main}
\begin{algorithmic}[1]
\Require Tree $T$, fixed $k$-spine decomposition, prediction $s$, direction oracle $\mathrm{dir}$
\State $T_{\mathrm{cur}} \gets T$, $s_{\mathrm{cur}} \gets s$
\While{$T_{\mathrm{cur}}$ is not a singleton}
    \State Let $P$ be the spine of $T_{\mathrm{cur}}$ in the decomposition
    \State Let $a$ be the closest vertex of $P$ to $s_{\mathrm{cur}}$
    \State Run exponential search on $P$ from $a$ until either $\textsf{here}$ is returned or the oracle points off $P$
    \If{the target is found on $P$}
        \State \Return the target
    \Else
        \State Let $x\in P$ be the vertex where the oracle points off the spine
        \State Let $y\notin P$ be the oracle answer at $x$
        \State Let $C$ be the component of $T_{\mathrm{cur}}\setminus P$ containing $y$
        \State $T_{\mathrm{cur}}\gets C$, $s_{\mathrm{cur}}\gets y$
    \EndIf
\EndWhile
\State \Return the unique vertex of $T_{\mathrm{cur}}$
\end{algorithmic}
\end{algorithm}

\subsection{Analysis}
\label{sec:alg-analysis}

We now prove that Algorithm~\ref{alg: kspine-main} satisfies Theorem~\ref{thm:MainUpperBound}. Namely, the algorithm correctly outputs the target vertex and uses at most $O(k \log \dist(s,t))$ oracle queries, where \(k=\pw(T)\).  We prove correctness in Section~\ref{sec:correctness} and query complexity in Section~\ref{sec:query-complexity}.

\subsubsection{Correctness}
\label{sec:correctness}

We first prove that Algorithm~\ref{alg: kspine-main} always returns the hidden target.

\begin{lemma}
\label{lem:correct}
Algorithm~\ref{alg: kspine-main} returns the target vertex.
\end{lemma}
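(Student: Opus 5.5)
The plan is to prove correctness via a loop invariant: at the start of every iteration of the while loop, the target $t$ lies in $V(T_{\mathrm{cur}})$ and $s_{\mathrm{cur}}\in V(T_{\mathrm{cur}})$. Initially $T_{\mathrm{cur}}=T$, so the invariant holds trivially. Combined with termination, this gives the result. If the loop exits because $T_{\mathrm{cur}}$ is a singleton, the invariant forces that vertex to be $t$. If instead the algorithm returns inside the loop, it does so only after the oracle answered $\textsf{here}$, which by Definition~\ref{def:direction-oracle} happens exactly at $t$.

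The key step is to show that each phase preserves the invariant. Let $P$ be the spine of $T_{\mathrm{cur}}$. Since $T_{\mathrm{cur}}$ is a connected subtree containing $t$, there is a unique vertex $x\in P$ closest to $t$. For any $v\in P$ with $v\neq x$, the $v$--$t$ path in $T$ passes through $x$, and its initial segment is the $v$--$x$ subpath of $P$. I need to justify this carefully, since the oracle is defined on the whole tree $T$, not on $T_{\mathrm{cur}}$. The argument is that $T_{\mathrm{cur}}$ is connected and contains both $v$ and $t$, so the unique $v$--$t$ path in $T$ lies inside $T_{\mathrm{cur}}$. It leaves $P$ exactly once, at $x$: if it left $P$ and later re-entered it, that would create a cycle in the tree.

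It follows that $\mathrm{dir}_t(v)$ is the neighbor of $v$ on $P$ toward $x$. So the queries on $P$ behave exactly like the one-dimensional oracle of Fact~\ref{fact:exponential-search-path} with hidden vertex $x$, except at $x$ itself. Exponential search therefore either queries $x$ or otherwise would identify $x$ as the only consistent vertex; in the latter case the algorithm queries $x$ explicitly. So the search finishes by querying $x$. If $x=t$, the oracle returns $\textsf{here}$. Otherwise, $\mathrm{dir}_t(x)=y$ is the next vertex on the $x$--$t$ path. That vertex is not on $P$, since $x$ is the last vertex of $P$ on the path. The remainder of the path from $y$ to $t$ avoids $P$, so $y$ and $t$ lie in the same component $C$ of $T_{\mathrm{cur}}\setminus P$. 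Setting $T_{\mathrm{cur}}=C$ and $s_{\mathrm{cur}}=y$ therefore restores the invariant.

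For termination, each phase that does not return replaces $T_{\mathrm{cur}}$ by a component of $T_{\mathrm{cur}}\setminus P$. Since $P$ is nonempty, this is a strictly smaller vertex set, so the loop runs finitely often. (The decomposition also guarantees that the pathwidth strictly drops, but I only need the size decrease here.) I also need that the decomposition assigns a spine to each such component $C$; this holds by Definition~\ref{def:k-spine-decomposition}, because $C$ is a component formed when the spine of its parent is removed.

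The main obstacle is the interface between the tree oracle and the path abstraction: showing that answers at spine vertices other than $x$ never point off $P$, and that exponential search necessarily queries $x$. The cleanest route is the cycle-freeness argument above.
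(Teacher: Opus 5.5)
Your proposal is correct and follows essentially the same route as the paper's proof. Both use the invariant $t\in V(T_{\mathrm{cur}})$, the observation that every spine vertex other than the attachment point $x$ points along $P$ toward $x$, and descent into the component of $T_{\mathrm{cur}}\setminus P$ containing $y$. Your variations are minor refinements of that argument: you prove termination by strict decrease of $|V(T_{\mathrm{cur}})|$ rather than of pathwidth, you make sure $x$ itself is actually queried, and you give a cycle-freeness justification that the oracle on $T$ acts on $P$ as a one-dimensional oracle.
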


\begin{proof}
Let \(t\) denote the hidden target. We prove the following invariant: at the beginning of every iteration of the while loop, $t\in V(T_{\mathrm{cur}})$.  The invariant holds initially because \(T_{\mathrm{cur}}=T\).

Now consider an arbitrary iteration and assume that \(t\in V(T_{\mathrm{cur}})\). Let \(P\) be the spine of \(T_{\mathrm{cur}}\). Since \(T_{\mathrm{cur}}\) is a tree and \(P\) is connected, there is a unique vertex \(x\in P\) closest to \(t\). Equivalently, \(x\) is the unique vertex of \(P\) at which the path from \(t\) to \(P\) meets the spine. If \(t\in P\), then \(x=t\). Otherwise, \(x\) is the unique vertex of \(P\) whose off-spine component contains \(t\).

We claim that exponential search on \(P\), started from the anchor \(a\), correctly identifies \(x\). For any queried vertex \(v\in P\), the oracle returns the neighbor of \(v\) on the unique \(v\)-to-\(t\) path, unless \(v=t\), in which case it returns \(\textsf{here}\). Thus every vertex of \(P\) on either side of \(x\) points along the spine toward \(x\). At \(x\), the oracle either returns \(\textsf{here}\), if \(x=t\), or returns the unique neighbor \(y\notin P\) on the \(x\)-to-\(t\) path. Therefore the oracle answers along \(P\) have exactly the one-dimensional structure needed by exponential search: the search either finds \(t\) on \(P\), or identifies the unique vertex \(x\in P\) where the path to \(t\) leaves the spine.

If the target is found on \(P\), the algorithm returns \(t\), so the output is correct. Otherwise, the oracle answer at \(x\) is a vertex \(y\notin P\) on the unique \(x\)-to-\(t\) path. Hence \(t\) lies in the connected component \(C\) of \(T_{\mathrm{cur}}\setminus P\) containing \(y\). The algorithm sets
\[
    T_{\mathrm{cur}}\gets C
    \qquad\text{and}\qquad
    s_{\mathrm{cur}}\gets y,
\]
so the invariant is preserved.

It remains to show that the algorithm terminates. Whenever the algorithm does not return during an iteration, it descends into a component of \(T_{\mathrm{cur}}\setminus P\). By the definition of the recursive spine decomposition, every such component has pathwidth strictly smaller than that of \(T_{\mathrm{cur}}\). Therefore each descent strictly decreases the pathwidth of the current component. Since pathwidth is a nonnegative integer, after finitely many descents the algorithm reaches a component of pathwidth \(0\), which is a singleton. By the invariant, that single vertex must be \(t\), and the final line of the algorithm returns it.
\end{proof}

\subsubsection{Query Complexity}
\label{sec:query-complexity}

We now bound the number of oracle queries made by Algorithm~\ref{alg: kspine-main}. 

\begin{theorem}
\label{thm:query-complexity}
Algorithm~\ref{alg: kspine-main} makes $O(k \log D)$ oracle queries, where \(k=\pw(T)\) and $D = \dist_T(s,t)$.
\end{theorem}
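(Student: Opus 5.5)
The plan is to bound the number of phases by $k$, and then bound the cost of each phase by $O(\log(D+1))$. Each phase costs the queries of one exponential search plus at most one extra query at $x$. The key claim is that in every phase the one-dimensional distance searched is at most $D$.

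\textbf{Step 1 (number of phases).} As argued in the proof of Lemma~\ref{lem:correct}, each descent passes from $T_{\mathrm{cur}}$ to a component of $T_{\mathrm{cur}}\setminus P$. By Definition~\ref{def:k-spine} this strictly lowers the pathwidth. Since $\pw(T)=k$ and the process stops at pathwidth $0$, there are at most $k$ phases that run an exponential search. (When $k=0$ the tree is a singleton and no queries are made.) Phases on spines of singleton components cost $O(1)$.

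\textbf{Step 2 (per-phase distance invariant).} I will maintain the invariant that at the start of every phase $s_{\mathrm{cur}}$ lies on the unique $s$--$t$ path in $T$, or equals $s$. Then $\dist(s_{\mathrm{cur}},t)\le D$. Initially this is trivial.

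In a phase let $a$ be the projection of $s_{\mathrm{cur}}$ onto $P$, and $x$ the projection of $t$. The tree structure gives
\[
\dist_P(a,x)\le \dist_T(s_{\mathrm{cur}},t)\le D.
\]
To see this, note that the $s_{\mathrm{cur}}$--$t$ path must enter $P$ at $a$ and leave at $x$ when $a\neq x$, so it contains the $a$--$x$ subpath of $P$. By Fact~\ref{fact:exponential-search-path}, the phase then costs $O(\log(D+1))$ queries.

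If we descend, the new start is $y$, the neighbor of $x$ on the $x$--$t$ path. Since the $s_{\mathrm{cur}}$--$t$ path passes through $x$ and then through $y$ (or $s_{\mathrm{cur}}$ already lies in the component $C$), $y$ lies on the $s_{\mathrm{cur}}$--$t$ path. That path is in turn a subpath of the $s$--$t$ path, which preserves the invariant.

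The one case to handle carefully is when $s_{\mathrm{cur}}$ already lies in $C$, so the path does not hit $P$. This cannot happen after the first phase: $s_{\mathrm{cur}}$ is in the current component, and if it were off $P$ in a component different from $t$'s we would have $a=x$ and the path goes through $x$. If it were in the same component as $t$, then $a=x$ is a vertex on the path and $y$ lies on the path from $x$ to $t$. In fact I expect to need only the weaker statement $\dist_T(y,t)\le \dist_T(s_{\mathrm{cur}},t)$, which holds whenever $y$ is on a shortest path from $x$ to $t$ and $x$ is on the $s_{\mathrm{cur}}$--$t$ path. So I would phrase the invariant simply as $\dist(s_{\mathrm{cur}},t)\le D$.

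\textbf{Step 3 (summing).} The total is $k\cdot O(\log(D+1))$. This gives $O(k\log D)$ under the usual convention that $\log D$ means $\log(D+2)$ or that $D\ge 2$. If $D=0$ the search finds $t$ within $O(k)$ queries, which is consistent with this reading.

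The main obstacle is Step 2: the geometric claim that the anchor-to-exit distance on each spine is bounded by the current tree distance, and that the new start $y$ is no farther from $t$. I would prove both with the single observation that in a tree any path between a vertex and $t$ must cross $P$ through its projections.
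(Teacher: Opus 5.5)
Your route is the paper's: at most $k$ phases, $\dist_P(a,x)\le\dist_T(s_{\mathrm{cur}},t)$ in each phase, and the invariant $\dist_T(s_{\mathrm{cur}},t)\le D$. The first two steps are fine. The invariant, however, breaks in exactly the case you flagged and then dismissed.

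Suppose $s_{\mathrm{cur}}$ lies in the same component $C$ of $T_{\mathrm{cur}}\setminus P$ as $t$. Then $a=x$ is the vertex of $P$ to which $C$ is attached. But the $s_{\mathrm{cur}}$--$t$ path lies entirely inside $C$, so $x$ is \emph{not} on it. Your sentence ``$a=x$ is a vertex on the path'' is therefore false, and the hypothesis of your weaker statement fails. As a result, $y$ (the neighbor of $x$ in $C$) need not lie on the $s_{\mathrm{cur}}$--$t$ path and can be much farther from $t$ than $s_{\mathrm{cur}}$ was. Nothing rules this case out: it already occurs in the first phase whenever $s$ and $t$ lie in the same component of $T\setminus P$.

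This is not only a gap in the write-up; the bound is false for the algorithm as written. Take the paper's own tree $T_{2,\ell}$ with its natural decomposition: spine $P_2$, then the path $P_1$ of each pendant copy of $T_{1,\ell}$. Put $s=p^1_\ell$ and $t=p^1_{\ell-2}$ in the same pendant copy, so $k=2$ and $D=2$.

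In phase one, $a=x$ is the attachment vertex on $P_2$. A single query there returns $y=p^1_1$, and the algorithm resets $s_{\mathrm{cur}}=p^1_1$, which is at distance $\ell-3$ from $t$. Phase two then runs exponential search along $P_1$ from $p^1_1$ to $p^1_{\ell-2}$. That costs $\Omega(\log \ell)$ queries, which is unbounded while $k\log D$ is constant.

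The paper's proof has the same hole. Its assertion that every update moves $s_{\mathrm{cur}}$ to a vertex on the current $s_{\mathrm{cur}}$-to-$t$ path is precisely the false claim.

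The repair has to be made in the algorithm, not just the proof. Update $s_{\mathrm{cur}}$ to the vertex of $C$ closest to $s_{\mathrm{cur}}$: keep $s_{\mathrm{cur}}$ if it already lies in $C$, and set it to $y$ otherwise. If $s_{\mathrm{cur}}\notin C$, the $s_{\mathrm{cur}}$--$t$ path must enter $C$ through the unique edge $xy$ joining $C$ to the rest of $T_{\mathrm{cur}}$, so $y$ lies on it. If $s_{\mathrm{cur}}\in C$, nothing changes. With this modification your invariant is preserved and Steps~1--3 go through.
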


\begin{proof}
Consider an arbitrary phase of the algorithm. Let \(T_{\mathrm{cur}}\) be the current component, let \(s_{\mathrm{cur}}\) be the current start vertex, and let \(P\) be the spine of \(T_{\mathrm{cur}}\). Let \(a\) be the closest vertex of \(P\) to \(s_{\mathrm{cur}}\).

By Lemma~\ref{lem:correct}, the target \(t\) lies in \(T_{\mathrm{cur}}\) at the beginning of the phase. Let \(x\in P\) be the unique vertex of \(P\) closest to \(t\). Equivalently, \(x\) is the vertex of \(P\) at which the path from \(t\) to \(P\) meets the spine; if \(t\in P\), then \(x=t\).

The cost of the phase is the cost of exponential search on the path \(P\), starting from \(a\), until it identifies \(x\) or finds the target. By the standard analysis of exponential search on a path, this takes
\[
    O\bigl(\log(\dist_P(a,x))\bigr)
\]
queries.

We next relate \(\dist_P(a,x)\) to the original prediction error \(D\). Since \(a\) is the projection of \(s_{\mathrm{cur}}\) onto \(P\) and \(x\) is the projection of \(t\) onto \(P\), the unique \(s_{\mathrm{cur}}\)-to-\(t\) path in the tree contains the subpath of \(P\) from \(a\) to \(x\). Hence
\[
    \dist_P(a,x)
    \le
    \dist_T(s_{\mathrm{cur}},t).
\]
Moreover, \(s_{\mathrm{cur}}\) never moves farther from the target. Initially \(s_{\mathrm{cur}}=s\). Whenever the algorithm updates \(s_{\mathrm{cur}}\), it sets it to a vertex \(y\) on the current \(s_{\mathrm{cur}}\)-to-\(t\) path. Therefore
\[
    \dist_T(s_{\mathrm{cur}},t)
    \le
    \dist_T(s,t)
    =
    D.
\]
Thus every phase uses at most $O\bigl(\log(D)\bigr)$ queries.

It remains to bound the number of phases. Whenever the algorithm does not terminate during a phase, it descends into a component of \(T_{\mathrm{cur}}\setminus P\). After this update, all later queries are made inside this component, so the algorithm never returns to the removed spine \(P\). Since \(P\) is the spine of \(T_{\mathrm{cur}}\) in the fixed spine decomposition, every such component has pathwidth strictly smaller than \(\pw(T_{\mathrm{cur}})\). Thus the pathwidth of the current component decreases by at least one after every descent. Since the initial pathwidth is $k$, there are at most $k$ nontrivial spine-search phases.

Multiplying the per-phase bound by the number of phases gives $O\bigl(k\log D\bigr)$ queries in total. 
\end{proof}

\subsubsection{Proof of the Main Upper Bound}
\label{sec:main-upper-bound-proof}

We can now combine correctness and query complexity.

\MainUpperBound*

\begin{proof}
The result follows immediately from Lemma~\ref{lem:correct} and Theorem~\ref{thm:query-complexity}.
\end{proof}

\section{Experiments}
\label{sec:experiments}

We empirically evaluate the oracle-query complexity of $k$-spine search on tree instances derived from real-world networks. Since our theoretical guarantees are stated in terms of oracle queries, the main performance measure is the average number of oracle queries. Given a tree \(T=(V,E)\), a prediction \(\hat p\in V\), a target \(p\in V\), and pathwidth \(\pw(T)=k\), we compare three algorithms: centroid search, $k$-spine search, and naive trace. Centroid search is prediction-agnostic and repeatedly queries a centroid of the current feasible subtree, giving worst-case query complexity \(O(\log |V(T)|)\). $k$-spine search is Algorithm~\ref{alg: kspine-main}, implemented using a fixed $k$-spine decomposition. Naive trace starts at \(\hat p\) and follows the oracle direction until reaching \(p\), using exactly \(\dist_T(\hat p,p)+1\) queries.

Implementation details, including a constant-factor initialization heuristic, are deferred to Appendix~\ref{app:experiments}.

The relevant prediction error is $d:=\dist_T(\hat p,p)$.  The analysis of $k$-spine search suggests a leading dependence of the form \(k\log d\), while centroid search has worst-case query complexity \(O(\log |V(T)|)\). Equating these terms gives the coarse reference scale
\[
    k\log d \approx \log |V(T)|,
    \qquad\text{or equivalently}\qquad
    d\approx |V(T)|^{1/k}.
\]
This scale should not be interpreted as an exact prediction of the empirical crossing point. In the experiments, centroid search is run on the actual tree instances and sampled prediction--target pairs, and its average query cost can be smaller than the worst-case upper bound because of the structure of the tree and the sampled target distribution. Thus the main empirical question is whether $k$-spine search beats both the actual centroid baseline and the naive trace baseline over a meaningful range of prediction errors.

\subsection{Real-world tree instances}
\label{sec:experiments-real}

The input graphs are taken from the Network Repository~\citep{nr}: the Luxembourg road network, the sc-msdoor scientific-computing graph, and two Orkut social-network instances. We convert each graph to a DFS spanning tree rooted at a low-degree peripheral vertex; for directed graphs, we first take the underlying undirected largest connected component. This root is used only for tree construction, not as the prediction. For each tree, we evaluate at fixed prediction error \(d\) by uniformly sampling ordered pairs \((\hat p,p)\) with \(\dist_T(\hat p,p)=d\). We run all algorithms on the same pairs and report average oracle queries conditioned on \(d\). Additional sampling details appear in Appendix~\ref{app:experimental-details}.

Table~\ref{tab:real-world-summary} summarizes the instances and empirical crossover behavior. The column \( |V(T)|^{1/k} \) reports the coarse worst-case comparison scale above. The column ``$k$-spine wins'' records the sampled distance interval on which $k$-spine search has lower average query complexity than both baselines. The empirical crossing is the estimated distance at which the $k$-spine and centroid curves meet.

\begin{table}[htb!]
\centering
\small
\begin{tabular}{lccccc}
\toprule
Dataset & \( |V(T)| \) & $k$ & \( |V(T)|^{1/k} \) & $k$-spine wins & Crossing \\
\midrule
Luxembourg road network & \(114{,}599\) & \(5\) & \(10.28\) & \(d=10\)--\(22\) & \(23.0\) \\
sc-msdoor & \(404{,}785\) & \(6\) & \(8.60\) & \(d=10\)--\(43\) & \(43.6\) \\
soc-orkut-dir & \(3{,}072{,}441\) & \(5\) & \(19.84\) & \(d=10\)--\(53\) & \(53.7\) \\
soc-orkut & \(2{,}997{,}166\) & \(5\) & \(19.74\) & \(d=10\)--\(67\) & \(68.0\) \\
\bottomrule
\end{tabular}
\caption{Summary of the real-world DFS tree instances. Prediction--target pairs are uniformly sampled among ordered pairs at fixed tree distance \(d\). The scale \( |V(T)|^{1/k} \) is a coarse worst-case reference obtained by comparing \(k\log d\) with \(\log |V(T)|\). The win range is the sampled interval on which $k$-spine search uses fewer average oracle queries than both naive trace and centroid search.}
\label{tab:real-world-summary}
\end{table}

Figure~\ref{fig:real-world-zoom} shows the pre-crossing regime. The shaded region marks the sampled distances for which $k$-spine search has lower average oracle-query complexity than both naive trace and centroid search. Full-range plots over the entire sampled distance range are deferred to Appendix~\ref{app:full-range}.

\begin{figure*}[t]
\centering
\includegraphics[width=0.47\textwidth]{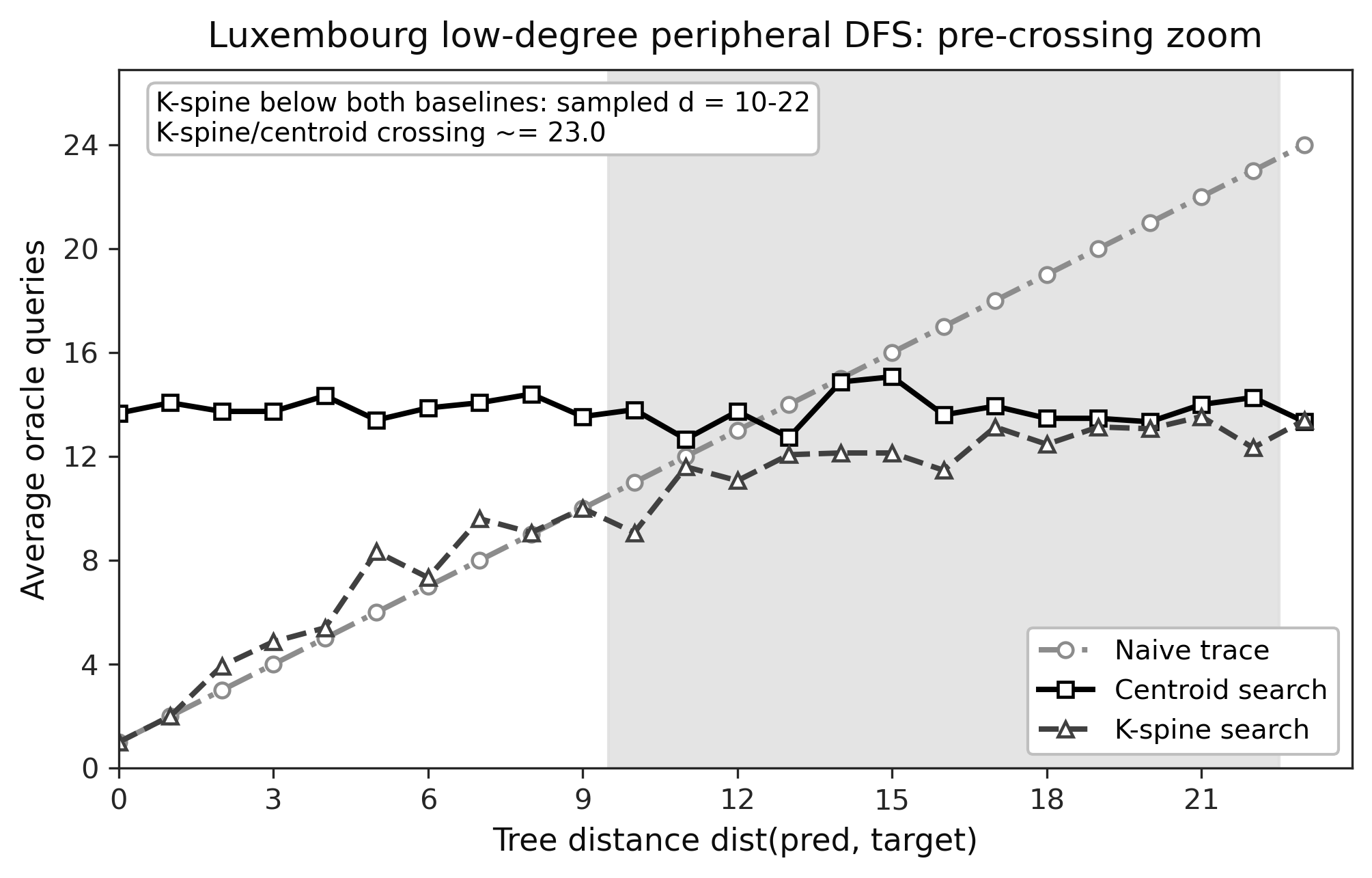}
\includegraphics[width=0.47\textwidth]{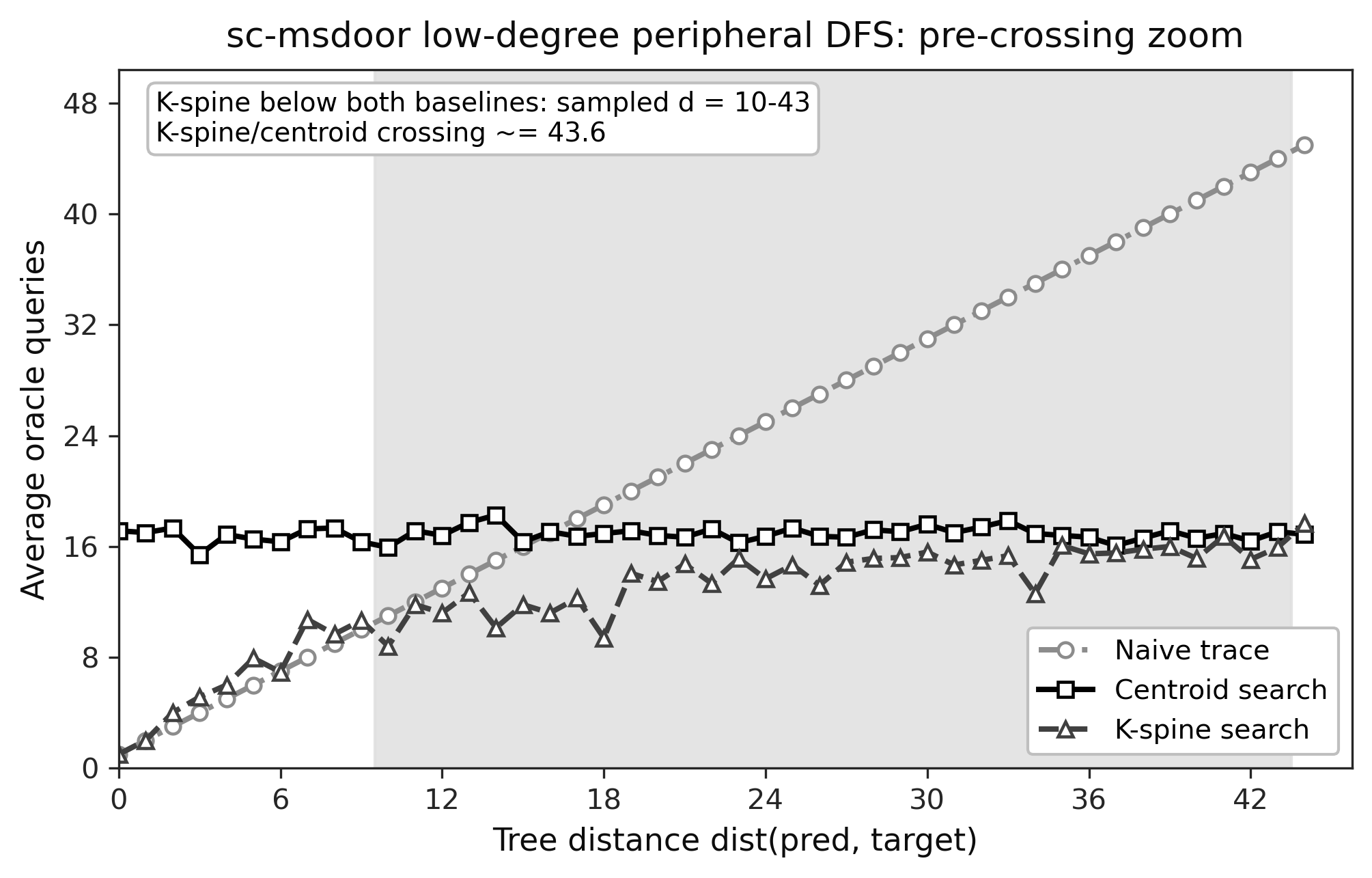}
\vspace{-0.4em}

\includegraphics[width=0.47\textwidth]{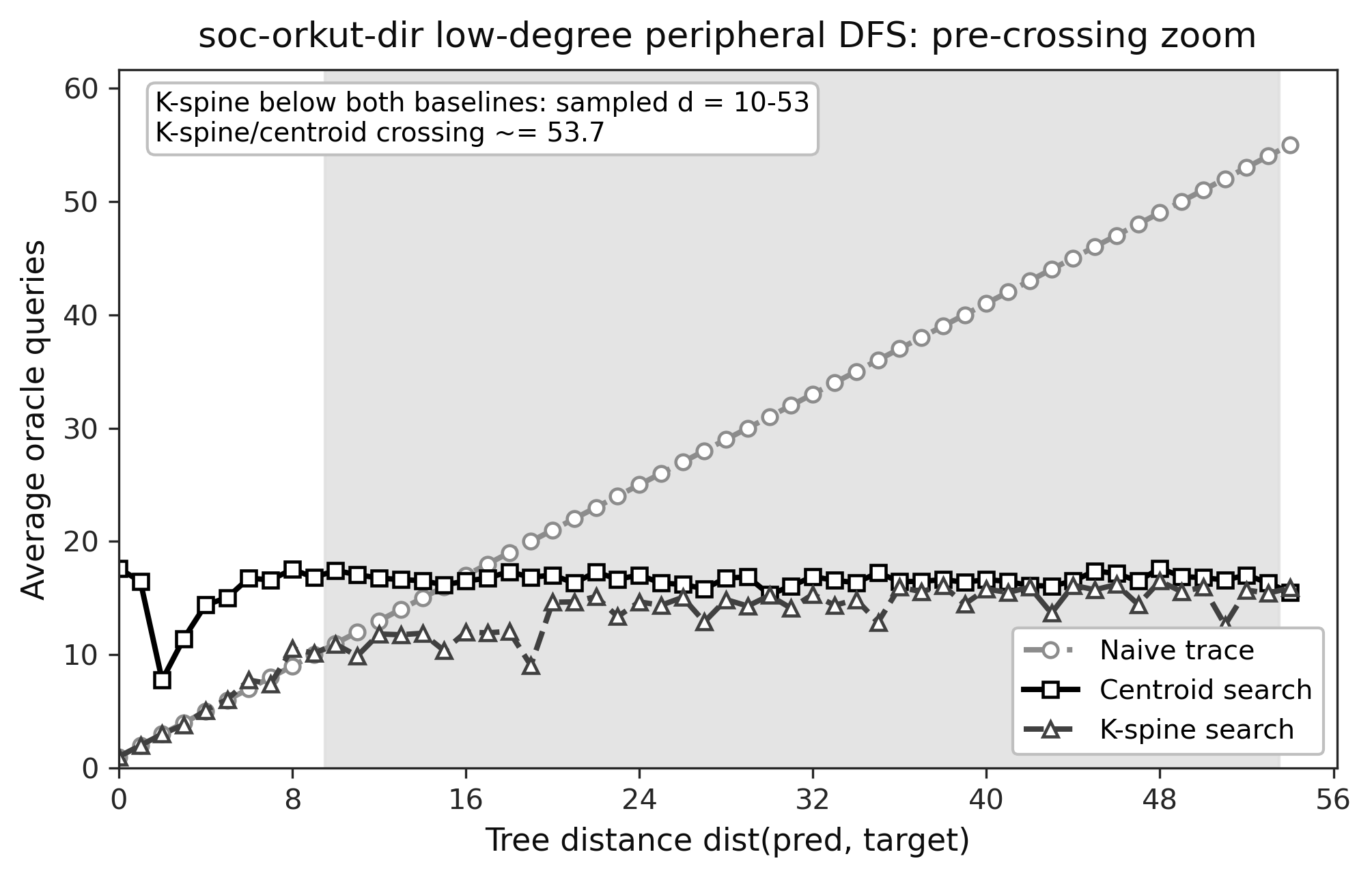}
\includegraphics[width=0.47\textwidth]{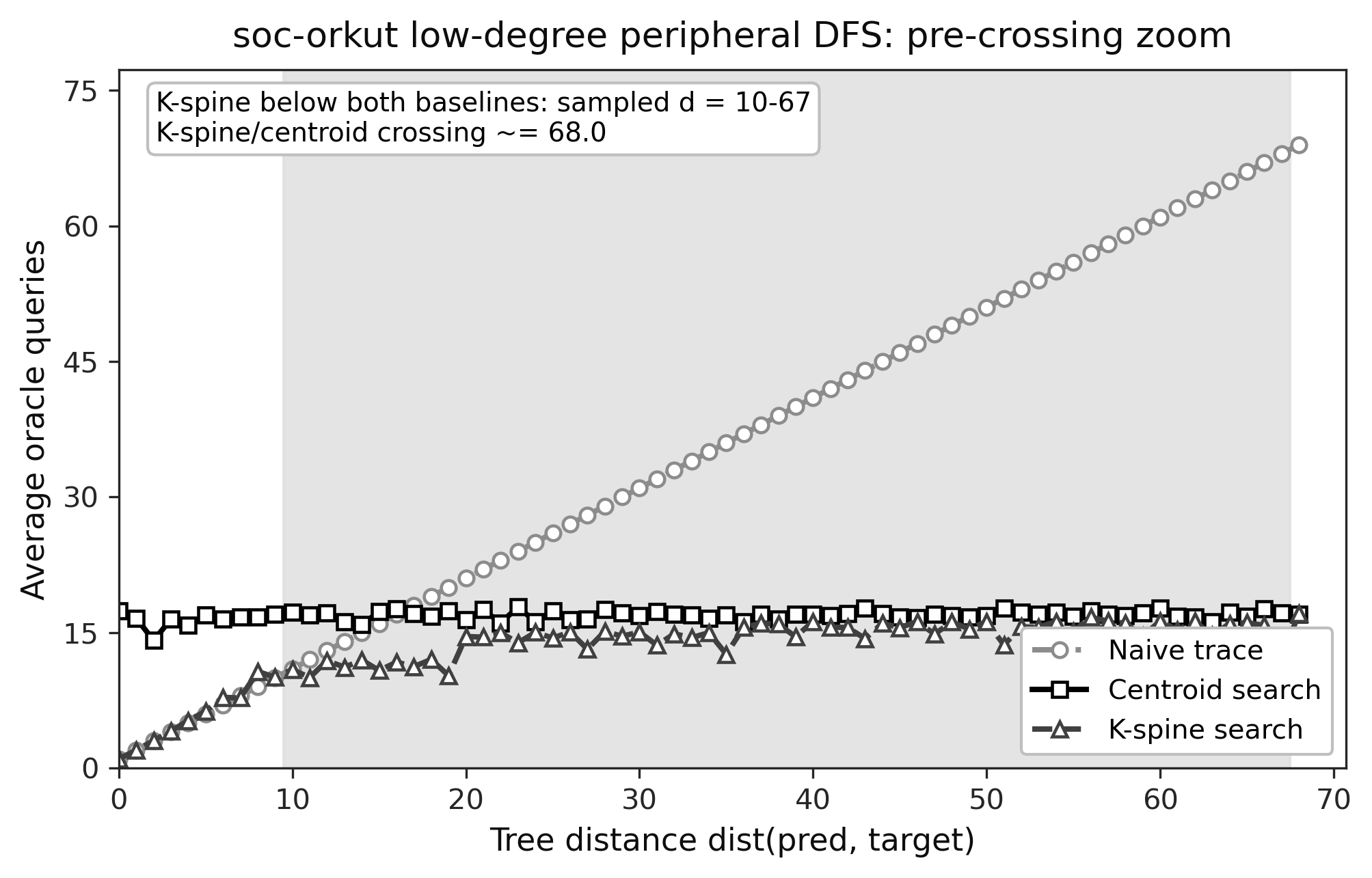}
\vspace{-0.5em}
\caption{Zoomed pre-crossing results. The shaded region marks distances where $k$-spine search uses fewer average oracle queries than both baselines.}
\label{fig:real-world-zoom}
\vspace{-0.6em}
\end{figure*}

Across all four datasets, $k$-spine search exhibits a clear intermediate-distance advantage. For very small \(d\), naive trace is difficult to beat: it has essentially no setup cost and uses exactly \(\dist_T(\hat p,p)+1\) queries, while our implementation of $k$-spine search incurs a small initial overhead before beginning the spine search. For very large \(d\), centroid search becomes competitive because its cost is essentially independent of the prediction. Between these regimes, $k$-spine search uses the prediction while avoiding the linear growth of naive trace, and it achieves the lowest average query complexity over a nontrivial interval in every instance. The initial overhead is small in all experiments and does not affect the asymptotic guarantees.

The empirical crossings are later than the coarse \( |V(T)|^{1/k} \) reference scale in all four instances: \(23.0\) versus \(10.28\) on Luxembourg, \(43.6\) versus \(8.60\) on sc-msdoor, and \(53.7,68.0\) versus roughly \(20\) on the two Orkut instances. These crossings should not be read as precise validations of a worst-case threshold, since centroid search is evaluated on the actual trees and significantly outperforms its worst-case bound in these experiments. Rather, the result is stronger than merely beating a theoretical upper bound: $k$-spine search remains below the actual centroid baseline over a substantial range of prediction errors, while also outperforming naive trace once the error is no longer very small.

\newpage

{\small
\bibliographystyle{plainnat}
\bibliography{sources}
}

\newpage
\appendix

\section{Notation and Preliminaries} \label{app:prelims}
We recall the standard definition of pathwidth~\citep{ROBERTSON198339}:

\begin{definition}[Path decomposition and pathwidth]
\label{def:path-decomposition}
Let $G=(V(G),E(G))$ be a graph. A \emph{path decomposition} of $G$ is a sequence of vertex sets
\[
    \mathcal{B}=(B_1,B_2,\dots,B_m),
    \qquad B_i\subseteq V(G),
\]
called bags, satisfying the following conditions:
\begin{enumerate}
    \item $\bigcup_{i=1}^m B_i=V(G)$;
    \item for every edge $uv\in E(G)$, there exists an index $i$ such that $\{u,v\}\subseteq B_i$;
    \item for every vertex $v\in V(G)$, the set $\{i:v\in B_i\}$ is an interval in $\{1,\dots,m\}$.
\end{enumerate}
The width of $\mathcal{B}$ is $\max_i |B_i|-1$. The \emph{pathwidth} of $G$, denoted $\pw(G)$, is the minimum width of a path decomposition of $G$.
\end{definition}

This was shown to be equivalent to Definition~\ref{def:pathwidth-recursive} through the relationship between pathwidth and vertex separation~\citep{KINNERSLEY1992345}.

\section{Lower Bounds} \label{app:lower-bound}

We now show that the dependence on the pathwidth parameter $k$ in Theorem~\ref{thm:MainUpperBound} is necessary. We construct a family of trees of pathwidth $k$ and maximum degree at most $3$ such that every deterministic or zero-error randomized algorithm has some target requiring $\Omega(k\log \dist(s,t))$ queries.

\subsection{Hard Instance Construction}
\label{subsec:lower-bound-construction}

Fix integers $k\ge 1$ and $\ell\ge \max\{4,k^2\}$. We recursively define a rooted tree $T_{h,\ell}$ for each $0\le h\le k$. The parameter $h$ denotes the remaining height of the construction, so level $0$ is the bottom level.

For the base case, let $T_{0,\ell}$ consist of a single vertex $r_0$. For $h\ge 1$, construct $T_{h,\ell}$ as follows. Start with a path
\[
    P_h=(p^h_1,p^h_2,\dots,p^h_\ell).
\]
For each $i\in\{1,\dots,\ell\}$, attach a disjoint copy $T^{(i)}_{h-1,\ell}$ of $T_{h-1,\ell}$ by adding an edge from $p^h_i$ to the root of $T^{(i)}_{h-1,\ell}$. The root of $T_{h,\ell}$ is defined to be
\[
    r_h:=p^h_1.
\]
By construction, every vertex of $T_{h,\ell}$ has degree at most $3$. The recursive structure of $T_{h,\ell}$ is illustrated in Figure~\ref{fig:hard-instance-construction}.

\begin{figure}
    \centering
    \includegraphics[width=0.9\linewidth]{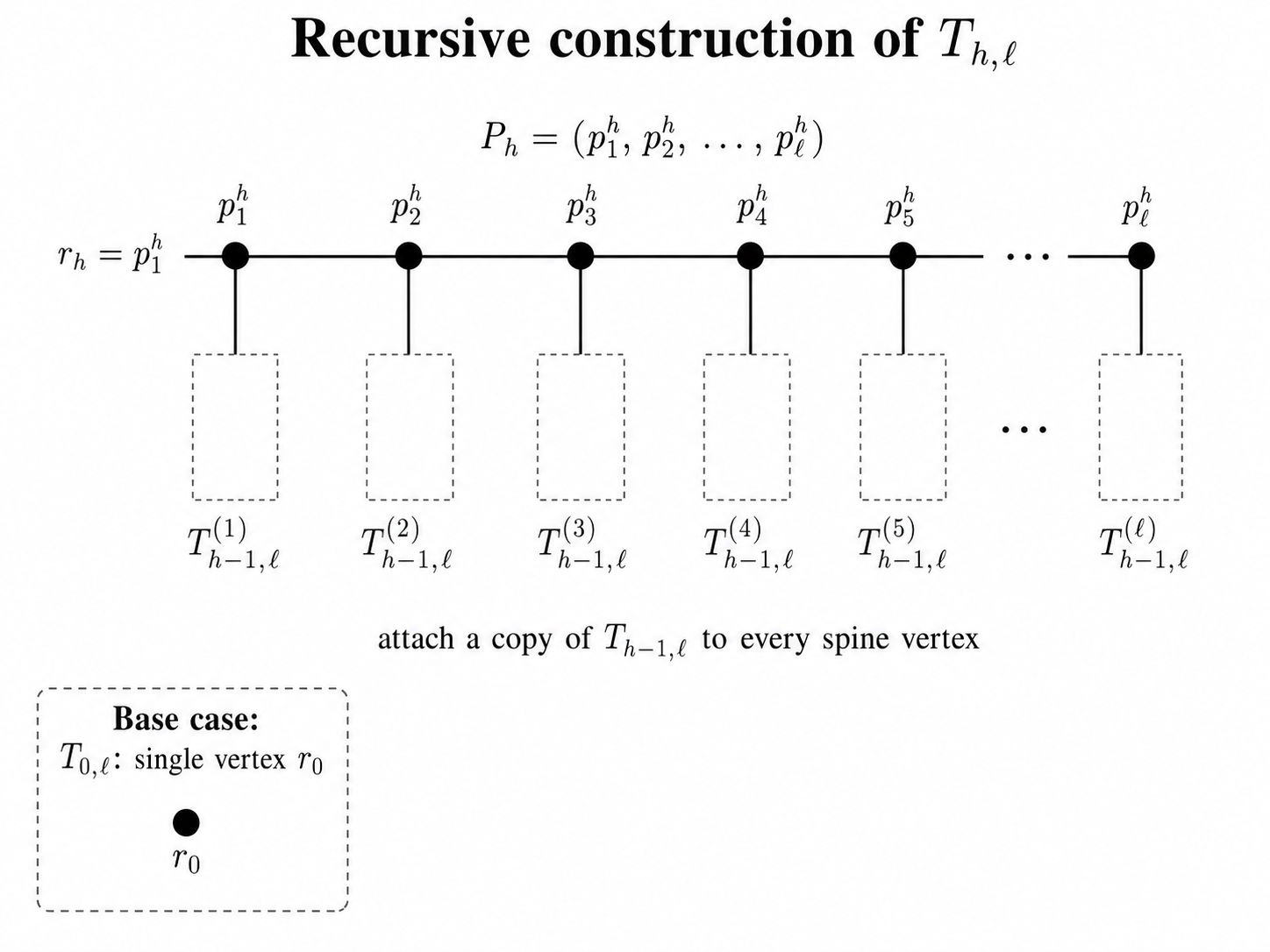}
    \caption{Recursive construction of $T_{h,\ell}$. The tree consists of a spine $P_h=(p^h_1,\dots,p^h_\ell)$, with a disjoint copy of $T_{h-1,\ell}$ attached to every spine vertex. The base case $T_{0,\ell}$ is a single vertex $r_0$.}
    \label{fig:hard-instance-construction}
\end{figure}

Next we define the bottom-level target set of the final tree $T_{k,\ell}$. Let
\[
    I:=\{\lceil \ell/2\rceil,\lceil \ell/2\rceil+1,\dots,\ell\}.
\]
For a sequence
\[
    \alpha=(i_k,i_{k-1},\dots,i_1)\in I^k,
\]
let $v_\alpha\in V(T_{k,\ell})$ be the vertex obtained by descending through the recursive construction according to $\alpha$. Concretely, starting from the root of $T_{k,\ell}$, we enter the copy of $T_{k-1,\ell}$ attached to the vertex $p^k_{i_k}$. Inside that copy, we then enter the copy of $T_{k-2,\ell}$ attached to its vertex $p^{k-1}_{i_{k-1}}$, and we continue in this way until reaching a copy of $T_{0,\ell}$. Since $T_{0,\ell}$ consists of a single vertex, this procedure identifies a unique bottom-level vertex $v_\alpha$.

We define
\[
    B_{k,\ell}:=\{v_\alpha:\alpha\in I^k\}.
\]
Thus $B_{k,\ell}$ is the set of admissible targets, all of which lie at level $0$ of the construction. Moreover,
\[
    |B_{k,\ell}|=|I|^k\ge (\ell/2)^k.
\]
Figure~\ref{fig:hard-instance-target-set} illustrates how a sequence $\alpha\in I^k$ determines a unique bottom-level target.

\begin{figure}
    \centering
    \includegraphics[width=0.9\linewidth]{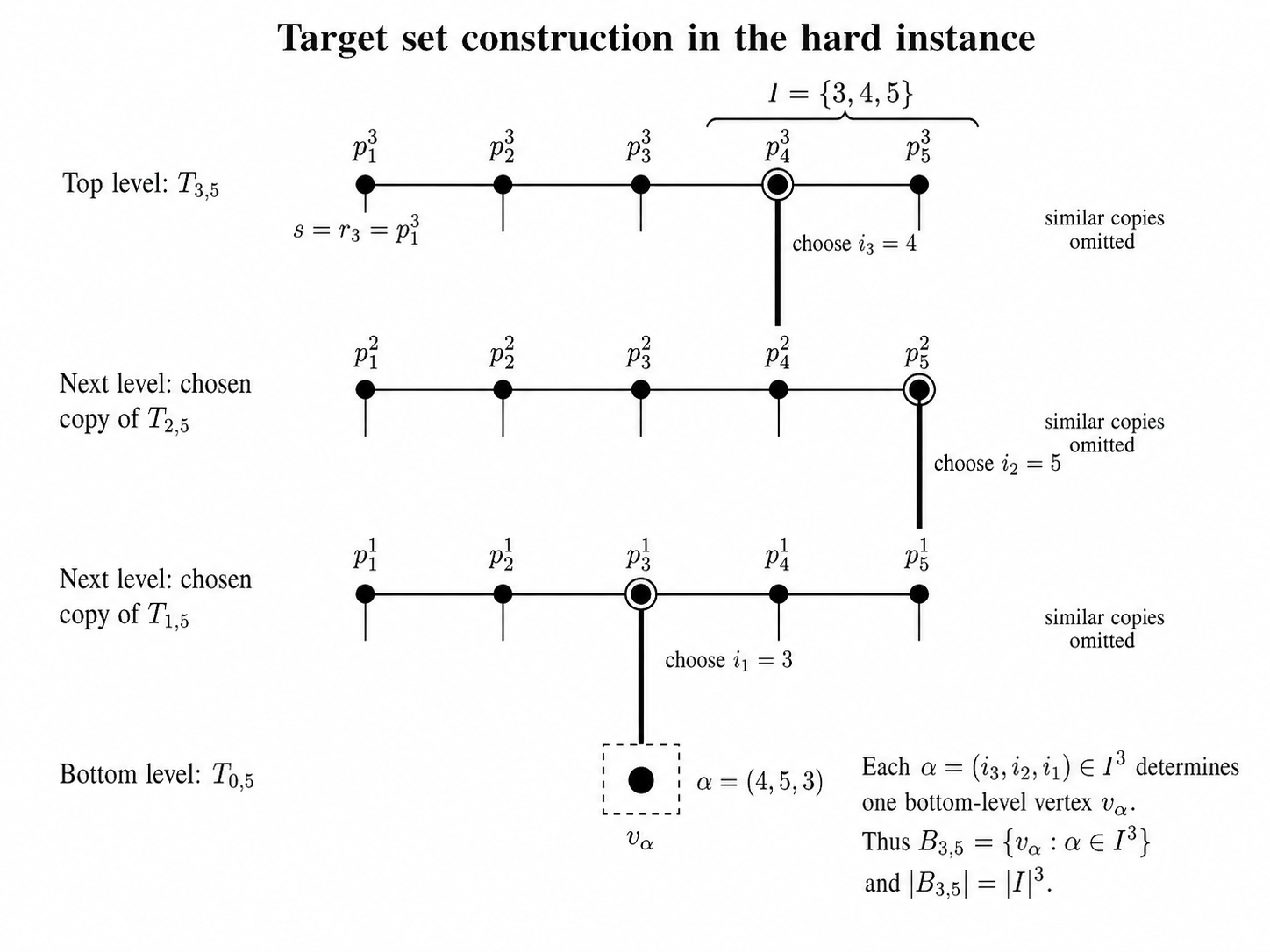}
    \caption{Illustration of the target-set construction for the hard instance, shown here for $k=3$ and $\ell=5$. At each recursive level, one chooses an attachment index in $I=\{3,4,5\}$. A sequence $\alpha=(i_3,i_2,i_1)\in I^3$ determines a unique bottom-level vertex $v_\alpha$. In general, each $\alpha\in I^k$ determines one target in $B_{k,\ell}$, so $|B_{k,\ell}|=|I|^k$.}
    \label{fig:hard-instance-target-set}
\end{figure}

The hard instance is the tree $T_{k,\ell}$ with prediction fixed to be
\[
    s:=r_k=p^k_1,
\]
the left endpoint of the top-level spine. The unknown target is promised to lie in the bottom-level target set $B_{k,\ell}$.

Intuitively, $T_{h,\ell}$ consists of a spine on $\ell$ vertices, with a copy of $T_{h-1,\ell}$ attached to every spine vertex. A target is specified by making one attachment choice from $I$ at each of the $k$ recursive levels. After $k$ such choices, the process reaches a bottom-level copy of $T_{0,\ell}$, which consists of a single vertex. Hence there are $\Theta(\ell)$ choices at each of $k$ levels, and therefore
\[
    |B_{k,\ell}|=|I|^k=\Theta(\ell^k).
\]

We first verify that this construction has the desired pathwidth.

\begin{lemma}[Pathwidth of the construction]
\label{lem:lb-pathwidth}
For every $h\ge 0$, the tree $T_{h,\ell}$ has pathwidth exactly $h$.
\end{lemma}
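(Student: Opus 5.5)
We prove the statement by induction on $h$, using the recursive characterization of Definition~\ref{def:pathwidth-recursive} in both directions. The base case $h=0$ is immediate: $T_{0,\ell}$ is a single vertex, so its pathwidth is $0$. For $h=1$, the tree $T_{1,\ell}$ is a path $P_1$ with one pendant leaf on each spine vertex, i.e.\ a caterpillar with at least one edge. It is not a singleton, so its pathwidth is at least $1$. Every component of $T_{1,\ell}\setminus\{x\}$ is either a single vertex or a caterpillar, and at most two components contain an edge, so the pathwidth is at most $1$. In fact this case will also fall out of the general argument below. Assume now $h\ge 1$ and that $\pw(T_{h-1,\ell})=h-1$.

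\textbf{Upper bound $\pw(T_{h,\ell})\le h$.} We verify the criterion of Definition~\ref{def:pathwidth-recursive} with $k=h$: for every vertex $x$, at most two components of $T_{h,\ell}\setminus\{x\}$ have pathwidth at least $h$. There are two cases.
\begin{enumerate}
\item If $x=p^h_i$ lies on the spine, the components are the attached copy $T^{(i)}_{h-1,\ell}$, which has pathwidth $h-1<h$ by induction, together with at most two pieces containing the rest of the spine. So at most two components can have pathwidth $\ge h$.
\item If $x$ lies inside an attached copy $T^{(i)}_{h-1,\ell}$, then every component not containing the spine is a subtree of that copy. By monotonicity of pathwidth under taking subgraphs (clear from Definition~\ref{def:path-decomposition}, by restricting bags), each such component has pathwidth at most $h-1$. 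Only the single component containing the spine can reach pathwidth $h$.
\end{enumerate}
Hence $\pw(T_{h,\ell})\le h$.

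\textbf{Lower bound $\pw(T_{h,\ell})\ge h$.} We use the equivalent form: $\pw(T)>h-1$ if there is a vertex $x$ such that $T\setminus\{x\}$ has at least three components of pathwidth at least $h-1$. Since $\ell\ge 4$, take $x=p^h_2$, which is neither endpoint of the spine. Removing it leaves three components, each with pathwidth at least $h-1$:
\begin{itemize}
\item the attached copy $T^{(2)}_{h-1,\ell}$, of pathwidth $h-1$ by induction;
\item the left piece, which contains $p^h_1$ together with $T^{(1)}_{h-1,\ell}$;
\item the right piece, which contains $p^h_3$ together with $T^{(3)}_{h-1,\ell}$.
\end{itemize}
The last two contain a copy of $T_{h-1,\ell}$ as a subgraph, so by monotonicity they have pathwidth at least $h-1$. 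Hence $\pw(T_{h,\ell})\ge h$.

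The only steps that need care are the use of subgraph monotonicity of pathwidth, which we take from the standard definition, and checking that the two-sided criterion is applied correctly for $h=1$. There the three components must only have pathwidth $\ge 0$, which holds trivially, so $h=1$ is covered by the same induction. I expect no real obstacle beyond this bookkeeping.
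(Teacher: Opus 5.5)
Your proof is correct. Your lower bound matches the paper's: the paper deletes an arbitrary internal spine vertex $p^h_i$ with $2\le i\le \ell-1$, you take $i=2$, and both arguments combine the three-component obstruction with subgraph monotonicity.

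The genuine difference is in the upper bound. The paper proves $\pw(T_{h,\ell})\le h$ directly from the bag definition. It takes a width-$(h-1)$ decomposition of each attached copy, adds $p^h_i$ to every bag, interleaves these with the two-vertex bags $\{p^h_i,p^h_{i+1}\}$, and checks coverage and contiguity by hand. You instead verify the ``at most two heavy branches'' condition of Definition~\ref{def:pathwidth-recursive} with $k=h$, using a two-case analysis on where the deleted vertex lies.

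Your route is shorter, and both bounds come out of a single characterization. However, it relies on the constructive direction of the Kinnersley/Ellis--Sudborough--Turner theorem: that the absence of a three-branch obstruction forces pathwidth at most $k$. That direction essentially contains the existence of a spine and its conversion into a path decomposition. The paper's explicit construction avoids this black box for the upper bound. It uses only the easier obstruction direction, for the lower bound, at the cost of some bookkeeping. Since the paper adopts the recursive characterization as its working definition of pathwidth for trees, your use of it is legitimate.

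One small wording point concerns $h=1$. The characterization is stated only for $k\ge 1$, so the $h=1$ lower bound is not literally ``the same criterion with $k=0$''. It is instead your earlier observation that $T_{1,\ell}$ is not a singleton, and that observation suffices.
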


\begin{proof}
We prove the claim by induction on $h$. The case $h=0$ is immediate, since $T_{0,\ell}$ is a single vertex.

First, we prove the upper bound. Assume inductively that $\pw(T_{h-1,\ell})=h-1$. For each attached copy $T^{(i)}_{h-1,\ell}$, take a path decomposition of width $h-1$ and add the attachment vertex $p^h_i$ to every bag. This increases the width by at most one, so the resulting bags have width at most $h$. Moreover, these bags cover the attachment edge from $p^h_i$ to the root of $T^{(i)}_{h-1,\ell}$.

Let $\mathcal{D}_i$ denote this modified path decomposition of the $i$th attached copy. Concatenate the decompositions in the order
\[
    \mathcal{D}_1,\ \{p^h_1,p^h_2\},\ \mathcal{D}_2,\ \{p^h_2,p^h_3\},\ \dots,\ \{p^h_{\ell-1},p^h_\ell\},\ \mathcal{D}_\ell.
\]
This is a valid path decomposition of $T_{h,\ell}$. Every edge inside an attached copy is covered by the corresponding $\mathcal{D}_i$, every attachment edge is covered because $p^h_i$ was added to all bags of $\mathcal{D}_i$, and every edge of $P_h$ is covered by one of the two-vertex bags. The contiguity condition also holds: each vertex inside an attached copy appears only inside its copy's decomposition, while each spine vertex $p^h_i$ appears only in $\mathcal{D}_i$ and in the adjacent path-edge bags. Therefore,
\[
    \pw(T_{h,\ell})\le h.
\]

It remains to prove the lower bound. Choose an internal vertex $p^h_i$ of $P_h$, with $2\le i\le \ell-1$. Removing $p^h_i$ creates three connected components: one containing the left side of $P_h$, one containing the right side of $P_h$, and the attached copy $T^{(i)}_{h-1,\ell}$. Each of these three components contains a copy of $T_{h-1,\ell}$, and hence each has pathwidth at least $h-1$ by the induction hypothesis and monotonicity of pathwidth under taking subgraphs.

We use the standard three-component obstruction for pathwidth on trees: if deleting a vertex of a tree leaves at least three connected components of pathwidth at least $r$, then the tree has pathwidth at least $r+1$. This follows from the recursive characterization of pathwidth on trees, equivalently from the vertex-separation characterization~\citep{KINNERSLEY1992345,ELLIS199450}. Applying this with $r=h-1$ gives
\[
    \pw(T_{h,\ell})\ge h.
\]
Combining the upper and lower bounds gives $\pw(T_{h,\ell})=h$.
\end{proof}

We next relate $\ell$ to the distance between the prediction and every possible target.

\begin{lemma}[Prediction error in the construction]
\label{lem:lb-distance}
For every target $t\in B_{k,\ell}$,
\[
    \dist_{T_{k,\ell}}(s,t)=\Theta(k\ell).
\]
Consequently, since $\ell\ge k^2$,
\[
    \log \dist_{T_{k,\ell}}(s,t)=\Theta(\log \ell).
\]
\end{lemma}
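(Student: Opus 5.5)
We compute the distance exactly, or at least up to constants, by tracing the unique path from $s=p^k_1$ to a target $v_\alpha$, with $\alpha=(i_k,\dots,i_1)\in I^k$, through the recursive structure. The key observation is that in $T_{h,\ell}$ the root is $r_h=p^h_1$, and each attached copy $T^{(i)}_{h-1,\ell}$ hangs off $p^h_i$ by a single edge to its own root $p^{h-1}_1$ (of that copy).

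The path from $s$ to $v_\alpha$ therefore proceeds level by level. At the top level it walks along $P_k$ from $p^k_1$ to $p^k_{i_k}$, which takes $i_k-1$ edges. It then takes one edge into the root of the attached copy of $T_{k-1,\ell}$, which is that copy's $p^{k-1}_1$. Inside that copy it walks $i_{k-1}-1$ edges along the spine and again takes one edge down. This continues until level $1$, where from $p^1_{i_1}$ one edge reaches the single vertex of the attached $T_{0,\ell}$. By uniqueness of paths in a tree, this walk is the geodesic, since it is a simple path: it never revisits a vertex because each step either moves forward along a spine or descends into a fresh subtree.

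Summing the lengths gives
\[
\dist_{T_{k,\ell}}(s,v_\alpha)=\sum_{j=1}^{k}\bigl((i_j-1)+1\bigr)=\sum_{j=1}^k i_j .
\]
Since each $i_j\in I$ satisfies $\ell/2\le i_j\le \ell$, this sum lies between $k\ell/2$ and $k\ell$, so it is $\Theta(k\ell)$.

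For the logarithmic consequence, we have $\log \dist = \log k+\log \ell+O(1)$. The hypothesis $\ell\ge k^2$ gives $\log k\le \tfrac12\log \ell$, and $\ell\ge 4$ ensures that $\log \ell$ is bounded away from zero, which absorbs the additive constant. Together these give $\log\dist=\Theta(\log\ell)$.

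The only point needing care is justifying that the described walk is the unique tree path. This follows because the construction is a tree and the walk is simple, so there is no real obstacle.
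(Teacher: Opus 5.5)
Your proof is correct and follows the paper's argument: you trace the path level by level so that level $j$ contributes exactly $i_j$ edges, which gives $k\ell/2 \le \dist_{T_{k,\ell}}(s,t) \le k\ell$, and then you use $\ell \ge k^2$ to get $\log \dist_{T_{k,\ell}}(s,t) = \Theta(\log \ell)$. Your explicit use of $\ell \ge 4$ to absorb the additive $-\log 2$ in the lower bound is slightly more careful than the paper, which only bounds $\log(k\ell)$ and leaves the factor-$2$ loss implicit.
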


\begin{proof}
By the definition of $B_{k,\ell}$, every target $t\in B_{k,\ell}$ is a bottom-level vertex indexed by a sequence
\[
    \alpha=(i_k,i_{k-1},\dots,i_1)\in I^k.
\]
At level $j$, the path from the root of the current copy of $T_{j,\ell}$ to the next lower-level copy moves along the current spine from $p^j_1$ to $p^j_{i_j}$ and then takes the attachment edge into the chosen copy of $T_{j-1,\ell}$. The distance along the spine is $i_j-1$, and the attachment edge contributes one additional edge, so level $j$ contributes exactly $i_j$ edges. Therefore,
\[
    \dist_{T_{k,\ell}}(s,t)=\sum_{j=1}^k i_j.
\]
Since each $i_j\in I$, we have $\ell/2\le i_j\le \ell$. Hence
\[
    \frac{k\ell}{2}
    \le
    \dist_{T_{k,\ell}}(s,t)
    \le
    k\ell.
\]
Thus $\dist_{T_{k,\ell}}(s,t)=\Theta(k\ell)$. Since $\ell\ge k^2$, we have $k\le \sqrt{\ell}$, and therefore
\[
    \log \ell
    \le
    \log(k\ell)
    \le
    \log(\ell^{3/2})
    =
    \frac{3}{2}\log \ell.
\]
Thus $\log(k\ell)=\Theta(\log \ell)$, which proves the second claim.
\end{proof}

We now prove a distributional decision-tree lower bound for the uniform distribution over the bottom-level targets.

\begin{lemma}[Distributional decision-tree lower bound]
\label{lem:lb-distributional}
Let $\mu$ be the uniform distribution over $B_{k,\ell}$. For every deterministic correct search algorithm $A$,
\[
    \mathbb{E}_{t\sim\mu}[Q_A(t)]
    =
    \Omega(k\log \ell),
\]
where $Q_A(t)$ denotes the number of oracle queries made by $A$ when the target is $t$.
\end{lemma}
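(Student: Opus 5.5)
Since the query answers are not binary, the plain counting bound $\log_3 |B_{k,\ell}|$ is not available directly: a query at a spine vertex can return one of three neighbors, and each query has only $O(1)$ outcomes because degrees are at most $3$. It does still give $\Omega(k\log \ell)$ up to constants. The plan is to view $A$ as a decision tree with branching factor at most $4$ (three neighbors plus \textsf{here}) and then apply the standard entropy/Kraft bound.

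First I will argue that each target $t\in B_{k,\ell}$ must correspond to a distinct leaf of this decision tree. The algorithm must output $t$. Two different targets reaching the same leaf would receive identical answers throughout, so the algorithm would give the same output for both, which is a contradiction. The leaves reached by targets in $B_{k,\ell}$ therefore form a prefix-free code over an alphabet of size $4$, with codeword lengths $Q_A(t)$. If the algorithm terminates by querying $t$, the final \textsf{here} answer is just one more symbol; if it outputs $t$ without querying it, the argument is the same.

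Next I will apply Kraft's inequality, $\sum_t 4^{-Q_A(t)}\le 1$, together with Gibbs' inequality, or equivalently Shannon's source coding bound. For the uniform distribution this gives
\[
  \mathbb{E}_{t\sim\mu}[Q_A(t)] \ge \log_4 |B_{k,\ell}| \ge \frac{k}{2}\log_2(\ell/2).
\]
Since $\ell\ge 4$, we have $\log(\ell/2)\ge \tfrac12\log \ell$, so the expectation is $\Omega(k\log\ell)$.

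The only subtle point is the injectivity and prefix-free step: correctness on all of $B_{k,\ell}$ forces distinct answer transcripts, and the transcripts come from a tree in which each node has at most $4$ children. Adaptivity is handled automatically, because the decision tree encodes the choice of which vertex to query next. I will also note that the construction's intended meaning, one "spine search" per level costing $\log \ell$, is not needed here; pure information counting suffices. The lemma is stated only for deterministic algorithms, and randomized algorithms follow afterward via Yao's principle.
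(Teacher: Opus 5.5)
Your proposal is correct and follows the paper's route: model $A$ as a decision tree with branching factor at most $4$, note that correctness forces distinct targets in $B_{k,\ell}$ to reach distinct leaves, lower-bound the average leaf depth by $\Omega(\log|B_{k,\ell}|)$, and finish with $|B_{k,\ell}|\ge(\ell/2)^k$ and $\ell\ge 4$. The only difference is in the middle step: you use Kraft's inequality with the source-coding bound to get exactly $\log_4|B_{k,\ell}|$, whereas the paper argues more elementarily that at most $N/4$ targets can have depth at most $\lfloor\log_4 N\rfloor-2$, which loses constants and needs a separate small-$N$ case.
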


\begin{proof}
Let
\[
    N:=|B_{k,\ell}|.
\]
Restricted to targets in $B_{k,\ell}$, the deterministic algorithm $A$ induces a decision tree. Since $T_{k,\ell}$ has maximum degree at most $3$, each oracle query has at most four possible outcomes: the answer $\textsf{here}$ or one of at most three neighboring vertices. Hence this decision tree has branching factor at most $4$.

For each target $t\in B_{k,\ell}$, let $d_t$ be the depth of the leaf reached by $A$ on target $t$. Since $A$ is correct on every target, distinct targets in $B_{k,\ell}$ must reach distinct leaves; otherwise the same transcript would force the same output on two different targets. Moreover,
\[
    Q_A(t)=d_t.
\]

We lower bound the average leaf depth by a simple counting argument. The case $N<16$ is absorbed into the constant in the $\Omega(\cdot)$ notation, so assume $N\ge 16$. Since the decision tree has branching factor at most $4$, the number of leaves of depth at most $q$ is at most
\[
    \sum_{i=0}^{q}4^i
    =
    \frac{4^{q+1}-1}{3}
    \le
    4^{q+1}.
\]
Choose
\[
    q:=\left\lfloor \log_4 N \right\rfloor-2.
\]
Then
\[
    4^{q+1}
    \le
    \frac{N}{4}.
\]
Thus at most $N/4$ targets in $B_{k,\ell}$ can reach leaves of depth at most $q$. Consequently, at least $3N/4$ targets have leaf depth greater than $q$.

Since $\mu$ is the uniform distribution on $B_{k,\ell}$, we obtain
\[
    \mathbb{E}_{t\sim\mu}[Q_A(t)]
    =
    \frac{1}{N}\sum_{t\in B_{k,\ell}} d_t
    \ge
    \frac{3}{4}q
    =
    \Omega(\log N).
\]
Using $N=|B_{k,\ell}|=|I|^k\ge(\ell/2)^k$, this gives
\[
    \mathbb{E}_{t\sim\mu}[Q_A(t)]
    =
    \Omega(\log |B_{k,\ell}|)
    =
    \Omega(k\log \ell),
\]
where the last equality uses $\ell\ge 4$.
\end{proof}

We are now ready to prove the lower bound.

\MainLowerBound*

\begin{proof}
Fix $k\ge 1$ and $\ell\ge \max\{4,k^2\}$, and consider the hard instance $T_{k,\ell}$ with prediction $s=r_k$. By Lemma~\ref{lem:lb-pathwidth}, this tree has pathwidth $k$.

For deterministic algorithms, Lemma~\ref{lem:lb-distributional} implies that under the uniform distribution on $B_{k,\ell}$, the average query cost is $\Omega(k\log \ell)$. Therefore, for every deterministic correct search algorithm $A$, there exists some target $t\in B_{k,\ell}$ on which $A$ makes $\Omega(k\log \ell)$ queries.

For randomized algorithms, we assume the algorithm must always output the correct target, and its query complexity is the expected number of oracle queries over its internal randomness. By Yao's minimax principle~\citep{4567946},
\[
    \min_{\mathcal{R}}
    \max_{t\in B_{k,\ell}}
    \mathbb{E}_{\rho}\bigl[Q_{\mathcal{R}_{\rho}}(t)\bigr]
    \ge
    \min_A
    \mathbb{E}_{t\sim\mu}\bigl[Q_A(t)\bigr],
\]
where $\mathcal{R}$ ranges over zero-error randomized algorithms, $\rho$ denotes the internal randomness of $\mathcal{R}$, and $A$ ranges over deterministic correct search algorithms. By Lemma~\ref{lem:lb-distributional}, the right-hand side is $\Omega(k\log \ell)$. Hence, for every zero-error randomized algorithm $\mathcal{R}$, there exists a target $t\in B_{k,\ell}$ such that
\[
    \mathbb{E}_{\rho}\bigl[Q_{\mathcal{R}_{\rho}}(t)\bigr]
    =
    \Omega(k\log \ell),
\]
with the tree $T_{k,\ell}$ and prediction $s=r_k$ fixed.

Finally, by Lemma~\ref{lem:lb-distance}, every target $t\in B_{k,\ell}$ satisfies
\[
    \log \dist_{T_{k,\ell}}(s,t)=\Theta(\log \ell).
\]
Thus the hard target guaranteed above satisfies
\[
    \mathbb{E}_{\rho}\bigl[Q_{\mathcal{R}_{\rho}}(t)\bigr]
    =
    \Omega\bigl(k\log \dist_{T_{k,\ell}}(s,t)\bigr).
\]
The deterministic statement follows identically without the expectation over $\rho$. Hence no deterministic or zero-error randomized algorithm can guarantee $o(k\log \dist(s,t))$ expected query complexity on every tree of pathwidth $k$, prediction $s$, and target $t$.
\end{proof}

Finally, we show that \autoref{thm:MainLowerBound} implies the statement of \autoref{thm:lb-binary-tree}.

\BinaryTreeLB*
\begin{proof}
Suppose, for contradiction, that there exists an algorithm for Search on Trees with expected query complexity
\[
    O(\log \dist(s,t))
\]
for every tree \(T\), every prediction \(s\in V(T)\), and every target \(t\in V(T)\). In particular, this guarantee would hold on the hard instances \(T_{k,\ell}\) from Section~\ref{subsec:lower-bound-construction}, for every choice of $k$.

Choose any non-constant sequence of pathwidths \(k\to\infty\). On the corresponding hard instances, the assumed guarantee gives
\[
    O(\log \dist(s,t))
    =
    o\bigl(k\log \dist(s,t)\bigr).
\]
This contradicts \autoref{thm:MainLowerBound}, which states that no deterministic or zero-error randomized algorithm can guarantee \(o(k\log \dist(s,t))\) expected query complexity on all trees of pathwidth $k$.

Therefore, no universal \(O(\log \dist(s,t))\) expected-query guarantee is possible for Search on Trees.
\end{proof}

\section{Experiments} \label{app:experiments}
We use one implementation optimization that improves constants but does not affect the asymptotic guarantees. Instead of always starting from the top-level spine, the algorithm first queries the prediction \(\hat p\). If the oracle returns \(\textsf{here}\), the algorithm terminates. Otherwise, the oracle answer identifies the neighbor of \(\hat p\) on the unique path to the target, and hence certifies which side of \(\hat p\) contains the target.

The implementation then starts the spine-search procedure from the lowest decomposition component consistent with this certified direction, rather than from the root of the decomposition. Intuitively, this skips high-level components that cannot contain the target. In our implementation, this component is found by searching upward through the decomposition hierarchy. This optimization can reduce unnecessary high-level spine searches in practice, but it does not affect correctness: the algorithm only skips components certified not to contain the target. It also does not change the asymptotic query-complexity guarantees of Algorithm~\ref{alg: kspine-main}.

The optimization introduces a small initial overhead because the prediction is queried before the recursive spine procedure begins. This explains why, at very small prediction error, naive trace can be slightly better: naive trace has almost no setup cost and uses exactly \(\dist_T(\hat p,p)+1\) oracle queries. The overhead is small in all experiments and is dominated once the prediction error leaves this very local regime.

\subsection{Experimental Details} \label{app:experimental-details}
For each resulting tree \(T=(V,E)\), we evaluate performance at fixed tree distance. For a distance \(d\), define
\[
    \mathcal{P}_d
    :=
    \{(\hat p,p)\in V(T)\times V(T): \dist_T(\hat p,p)=d\}.
\]
We uniformly sample ordered pairs \((\hat p,p)\) from \(\mathcal{P}_d\), treat \(\hat p\) as the prediction and \(p\) as the target, and run all three algorithms on the same sampled pairs. Thus the \(x\)-axis in the plots is the prediction error \(d\), and the \(y\)-axis is the average number of oracle queries conditioned on that error.

\subsection{Full-range plots} \label{app:full-range}

Figure~\ref{fig:real-world-full-log} shows the full-range results over the sampled distance range. These plots use a logarithmic \(x\)-axis to display behavior across both small and large prediction errors. They illustrate the large-scale qualitative behavior: naive trace grows with the prediction error, while centroid search and $k$-spine search remain substantially flatter.

\begin{figure*}[htb!]
\centering
\includegraphics[width=0.49\textwidth]{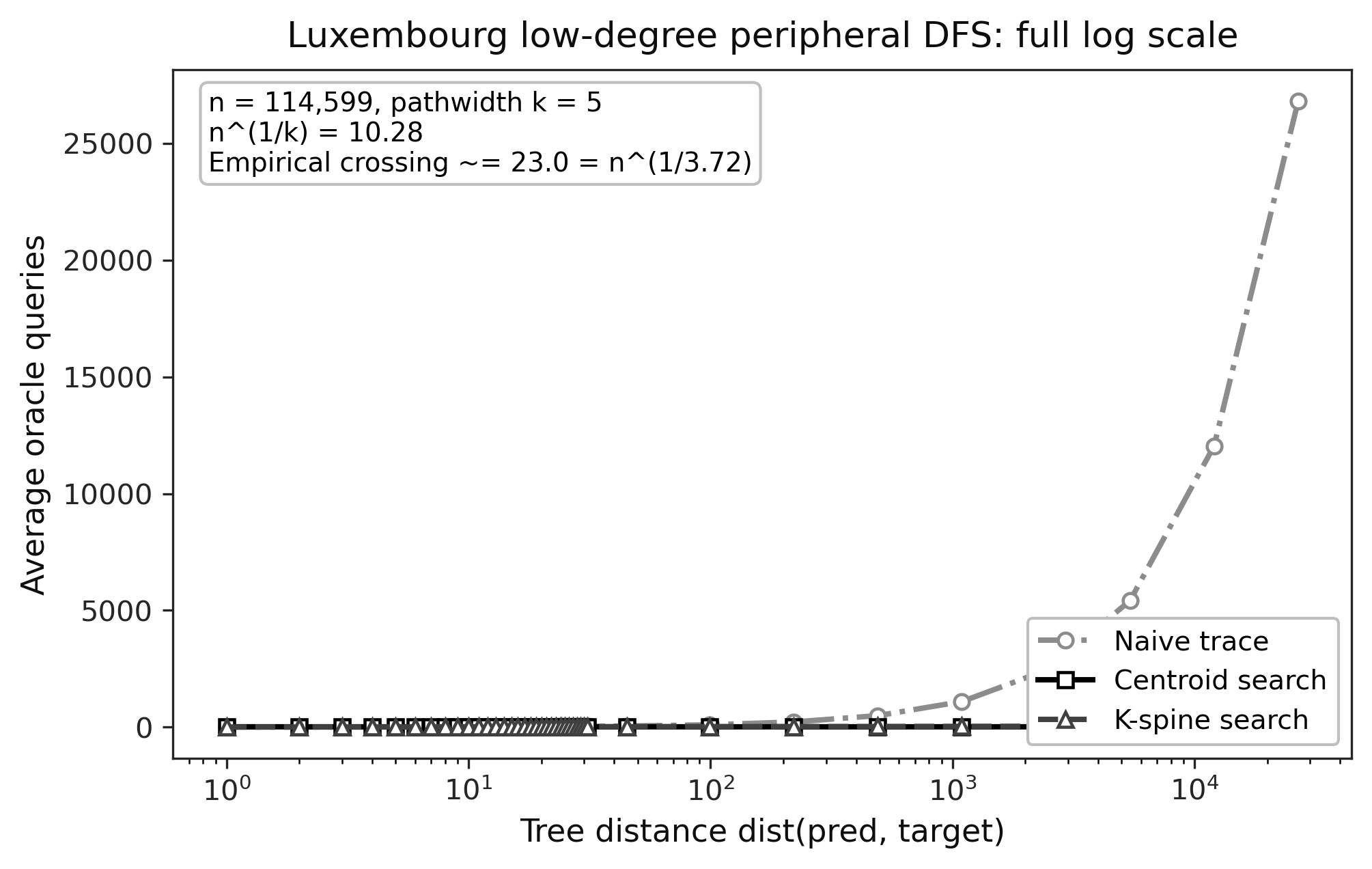}
\includegraphics[width=0.49\textwidth]{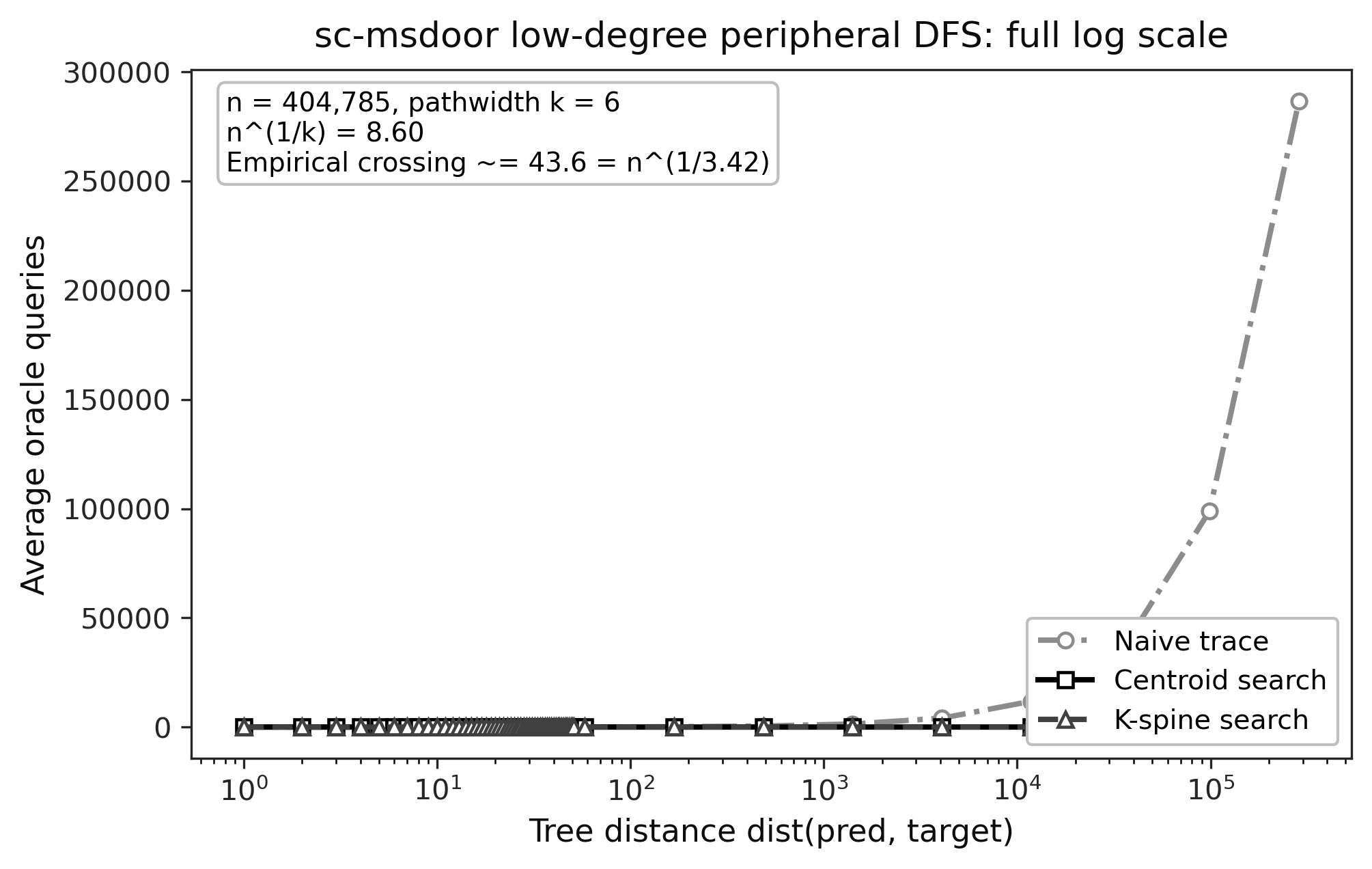}

\vspace{0.4em}

\includegraphics[width=0.49\textwidth]{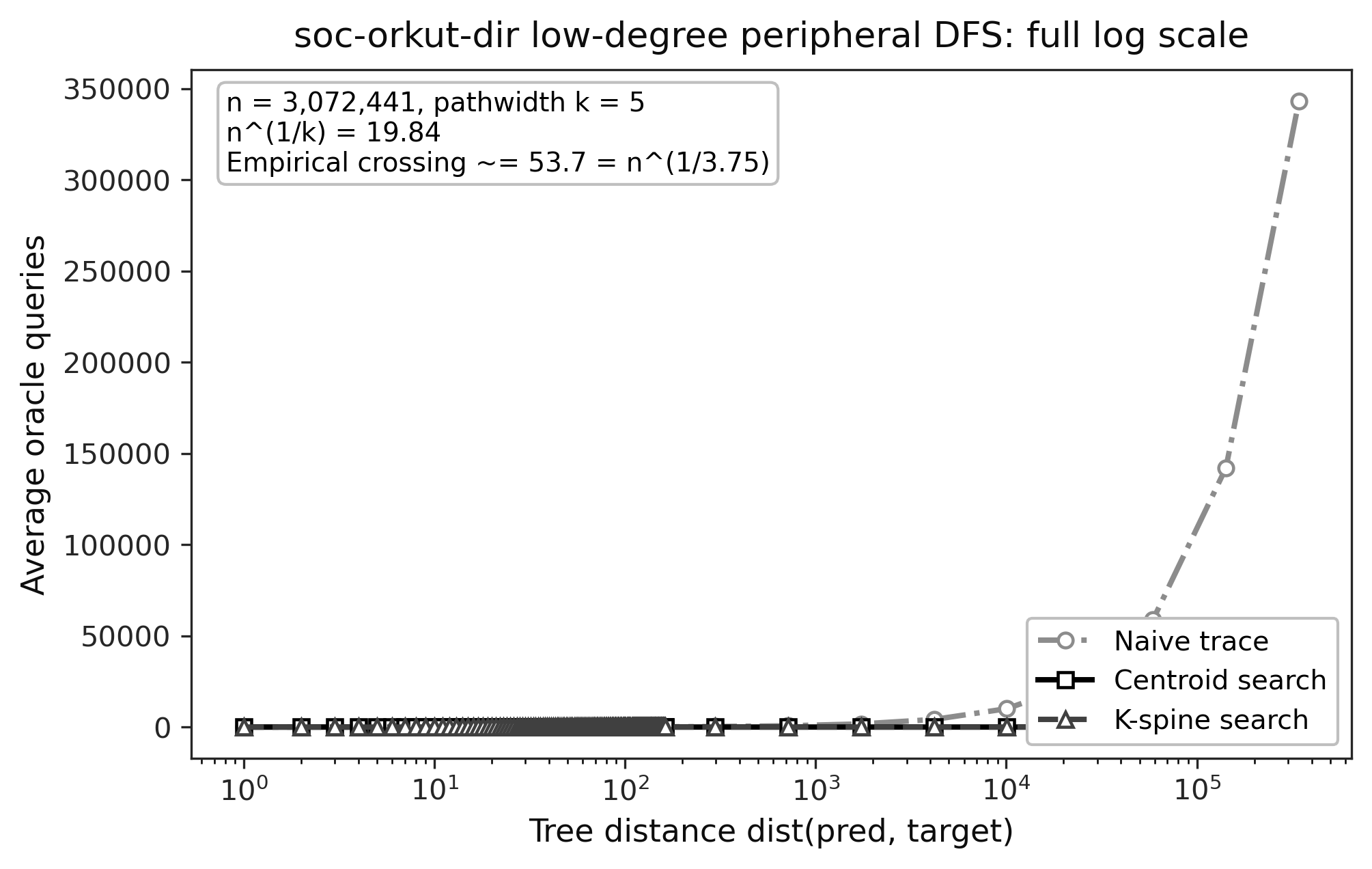}
\includegraphics[width=0.49\textwidth]{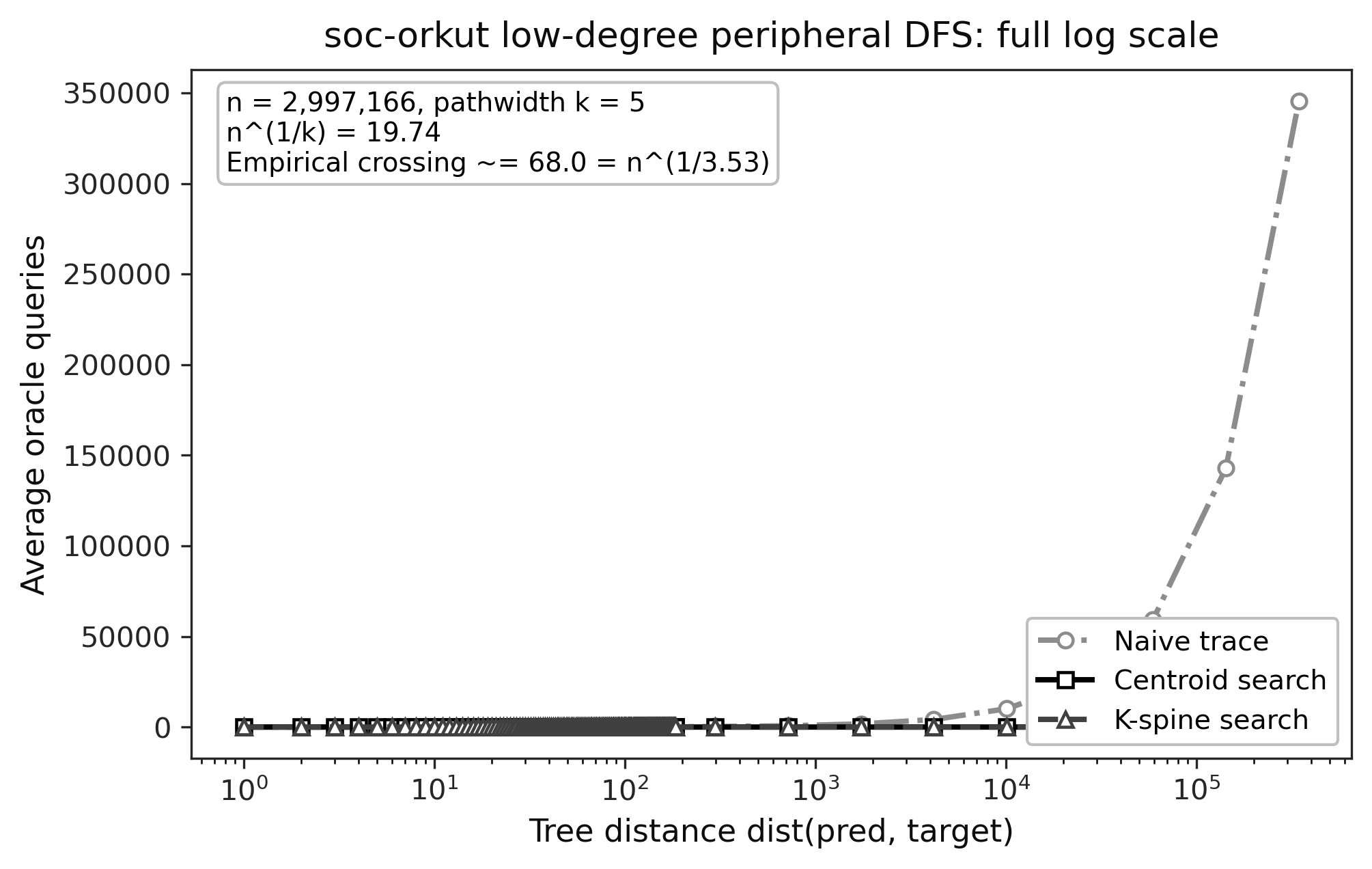}
\caption{Full-range results on real-world DFS tree instances. Top row: Luxembourg road network and sc-msdoor. Bottom row: soc-orkut-dir and soc-orkut. Naive trace grows with the prediction error, while centroid search and $k$-spine search remain substantially flatter on this scale.}
\label{fig:real-world-full-log}
\end{figure*}

\section{Computing $k$-Spine}
\label{app:computing k-spine}

We briefly describe the preprocessing step used to compute the recursive spine decomposition. This step is not central to our contribution: our main results concern the number of oracle queries once such a decomposition is available. We therefore use a standard exact tree pathwidth routine as a preprocessing subroutine, rather than optimizing this part of the implementation.

Our implementation is based on the classical characterization of tree pathwidth via vertex separation, together with the EST labeling procedure for trees~\citep{ELLIS199450}. Since vertex separation is equivalent to pathwidth, the EST labels allow us to compute the exact pathwidth of each tree component. For each connected component in the recursive decomposition, we root the component at a chosen vertex and compute EST labels bottom-up. These labels identify the bottleneck pathwidth levels in the rooted subtrees and provide enough information to recover a canonical backbone path. We use this backbone as the current spine. Removing this spine leaves connected components of strictly smaller pathwidth, and we recursively apply the same procedure to each remaining component.

Equivalently, one can view this as an implementation of the following straightforward exact procedure. For a connected subtree $T'$, compute its pathwidth level $k'=\pw(T')$. Then find a path $P$ such that every connected component of $T'\setminus P$ has pathwidth at most $k'-1$. The path $P$ becomes the spine at the current node of the decomposition tree, and the algorithm recurses on the components of $T'\setminus P$.

The preprocessing runs in polynomial time. At each recursive level, the current components are disjoint subtrees of the original tree, so the total number of vertices processed at that level is at most $n$. Moreover, the recursion depth is at most $\pw(T)$, since the pathwidth level decreases by at least one after each spine removal. For trees, $\pw(T)=O(\log |V|)$, so there are at most logarithmically many recursive levels. Since the underlying EST vertex-separation routine runs in polynomial time on trees, the full preprocessing procedure is polynomial-time as well.

We emphasize that this preprocessing is used only to construct the decomposition before running the search algorithm. The focus of the paper is the query complexity of prediction-based search on the resulting decomposition, not the optimization of the decomposition routine itself.

\section{Experimental Setup}
All experiments are conducted in Python 3.13.9 on a 16-inch MacBook Pro equipped with an Apple M2 Pro chip, consisting of a 12-core CPU with 8 performance cores and 4 efficiency cores, 16 GB of unified memory, and 512 GB of storage, running macOS Tahoe version 26.4.1. We emphasize that our primary metric is oracle-query complexity, which is hardware-independent; the hardware specification is reported only for reproducibility of the implementation-level experiments.

\newpage

\end{document}